\newtheorem{lemma}{Lemma}
\newtheorem{theorem}[lemma]{Theorem}
\newtheorem{observ}[lemma]{Observation}
\newcommand{\R}{\mathbb{R}}
\begin{document}
\pagestyle{plain}

\title{Covering and Piercing Disks with Two Centers}
 
\author{Hee-Kap Ahn\thanks{Department of Computer Science and Engineering, POSTECH, Pohang, Korea. {\tt \{heekap,
      helmet1981\}@postech.ac.kr}} \and Sang-Sub Kim\footnotemark[2] \and Christian Knauer\thanks{Institute of Computer Science, Universit\"at Bayreuth, 95440 Bayreuth, Germany.
  {\tt christian.knauer@uni-bayreuth.de}}
\and
Lena Schlipf\thanks{Institute of Computer Science, Freie Universit\"at Berlin, Germany. {\tt schlipf@mi.fu-berlin.de}} \and Chan-Su Shin \thanks{
Department of Digital and Information Engineering,
    Hankuk University of Foreign Studies, Yongin, Korea.
    {\tt cssin@hufs.ac.kr}}
 \and Antoine Vigneron\thanks{Geometric Modeling and Scientific Visualization Center, KAUST,
    Thuwal, Saudi Arabia. {\tt antoine.vigneron@kaust.edu.sa}}}

\date{}
\maketitle

\begin{abstract}
  We give exact and approximation algorithms
  for two-center problems when the input is a set
  $\mathcal{D}$ of disks in the plane. We first
  study the problem of finding two
  smallest congruent disks such that each disk in $\mathcal{D}$ 
  intersects one of these two disks. Then we study the problem of
  covering the set $\mathcal{D}$ by two smallest congruent disks.
\end{abstract}

\section{Introduction}

The standard \emph{two-center problem} is a well known and extensively studied problem:
Given a set $P$ of $n$ points in the plane, find two smallest congruent disks that cover all points in $P$. 
The best known deterministic algorithm runs in $O(n \log^2 n \log^2 \log n)$~\cite{chan99}
and there is a randomized algorithm with expected running time $O(n\log^2n)$ \cite{eppstein97}.
There has also been a fair amount of work on several variations of the two-center problem:
for instance, the two-center problem for weighted points \cite{drezner1984}, and for a convex polygon \cite{shin98}. 

In this paper we consider new versions of the problem where the input consists of 
a set $\mathcal{D}$ of $n$ disks (instead of points):  In the  \emph{intersection problem} we want to
compute two smallest congruent disks
$C_1$ and $C_2$ such that each disk in $ \mathcal{D}$ intersects $C_1$ or
$C_2$, while in the \emph{covering problem}, all disks in $\mathcal{D}$ have 
to be contained in the union of $C_1$ and $C_2$.
To the best of our knowledge these problems have not been considered so far. 
However, linear-time algorithms are known for both the covering and the intersection 
problem with only one disk~\cite{fischer2003,LvK-LargestBounding-10,megiddo89}.

%
%

\paragraph{Our results.} 
In order to solve the intersection problem, we first consider the two-piercing problem: Given 
a set of disks, decide whether there exist two points such that each disk contains at least one 
of these points. We show that this problem can be solved in $O(n^2 \log^2 n)$ expected time 
and $O(n^2\log^2\log\log n)$ deterministic time. Using these algorithms we can solve 
the intersection problem in $O(n^{2}\log^3 n)$ expected time and $O(n^2\log^4 n\log\log n)$ deterministic time.

For the covering problem we consider two cases:
In the \emph{restricted case} each $D\in\mathcal{D}$ has
to be fully covered by one of the disks $C_1$ or $C_2$. In the
\emph{general case} a disk $D\in\mathcal{D}$ can be covered by the
union of $C_1$ and $C_2$. We show how the algorithms for the
intersection problem can be used to solve the restricted covering
case and present an exact algorithm for the general case. We complement these results by giving efficient approximation
algorithms for both cases.

All the results presented in this paper are summarized in the following table.

\begin{center}
\footnotesize
\setlength{\tabcolsep}{5pt}
\renewcommand{\arraystretch}{1.2}
\begin{tabular}{|l| l l|}
\hline
&Exact algorithm&$(1+\epsilon)$-approximation\\ \hline
Intersection problem & $O(n^2\log^4 n \log \log n)$ & --\\
& $O(n^2\log^3 n)$ expected time&\\\hline
General covering problem &$O(n^3\log^4 n)$&$O(n+1/\epsilon^3)$\\ \hline
Restricted covering problem &$O(n^2\log^4 n \log \log n)$ & $O(n+(1/\epsilon^3)\log 1/\epsilon)$\\
&$O(n^2\log^3 n)$ expected time&\\ \hline 
\end{tabular}
\end{center}

\paragraph{Notation.}
The radius of a disk $D$ is denoted by $r(D)$ and its center by
$c(D)$. The circle that forms the boundary of $D$ is denoted by $\partial D$. 

Without loss of generality, we assume that no disk in  $\mathcal{D}$ contains
another disk in $\mathcal{D}$.

\section{Intersecting Disks with Two Centers}\label{sec:Intersecting}
In this section we consider the following
%
%
intersection problem:
Given a set of disks $\mathcal{D}=\{D_1,\dots, D_n\}$, we want to find two 
smallest congruent disks $C_1$ and $C_2$ such that each disk $D\in \mathcal{D}$ has 
a nonempty intersection with $C_1$ or $C_2$.

Based on the observation below, there is an $O(n^3)$ algorithm for
this problem.
\begin{observ}\label{obs:bisector}
Let $(C_1,C_2)$ be a pair of optimal covering disks.
Let $\ell$ be the bisector of the segment connecting the centers of $C_1$ and $C_2$. 
Then, $C_i\cap D\neq\emptyset$ for every $D\in \mathcal{D}$
whose center lies on the same side of $\ell$ as the center of $C_i$,
for $i=\{1,2\}$.
\end{observ}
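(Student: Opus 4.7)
The plan is to reduce the geometric claim to the standard algebraic condition for two disks to intersect. Writing $r$ for the common radius of the congruent pair $(C_1,C_2)$, two disks $A,B$ satisfy $A\cap B\neq\emptyset$ if and only if $|c(A)c(B)|\le r(A)+r(B)$, so the desired conclusion $C_i\cap D\neq\emptyset$ is equivalent to the inequality $|c(C_i)c(D)|\le r+r(D)$. My strategy is to establish this inequality for every $D$ whose center lies on the $C_i$-side of $\ell$.

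Next I would use the defining property of the perpendicular bisector: a point $p$ lies on the same (closed) side of $\ell$ as $c(C_i)$ if and only if $|c(C_i)p|\le|c(C_{3-i})p|$. Taking $p=c(D)$ for a disk $D$ whose center is on the $C_i$-side of $\ell$ gives the key comparison $|c(C_i)c(D)|\le|c(C_{3-i})c(D)|$.

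Finally, I would invoke feasibility of the pair $(C_1,C_2)$. Since $(C_1,C_2)$ is a valid solution to the intersection problem, $D$ must meet at least one of $C_1,C_2$. If $D\cap C_i\neq\emptyset$ we are done; otherwise $D\cap C_{3-i}\neq\emptyset$, which gives $|c(C_{3-i})c(D)|\le r+r(D)$. Chaining this with the bisector inequality yields $|c(C_i)c(D)|\le r+r(D)$, so $C_i\cap D\neq\emptyset$ as required.

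Since the argument uses only the elementary intersection criterion together with the definition of the perpendicular bisector, I do not anticipate any real technical obstacle. The only point worth emphasizing is that optimality of $(C_1,C_2)$ enters solely through \emph{feasibility} (every input disk meets $C_1$ or $C_2$); the minimality of the radius plays no role in this observation, although it is what justifies using it later to drive the $O(n^3)$ enumeration algorithm.
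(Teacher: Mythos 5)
Your proof is correct: the paper states this as an unproved observation, and your argument---using the perpendicular-bisector inequality $|c(C_i)c(D)|\le|c(C_{3-i})c(D)|$, the feasibility of $(C_1,C_2)$, the congruence of the two disks, and the criterion that two disks meet iff the distance between centers is at most the sum of the radii---is exactly the standard justification the authors leave implicit. Your remark that only feasibility (not minimality of the radius) is needed is also accurate.
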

A simple approach would be, for every bipartition
of the centers of the disks in $\mathcal{D}$ by a line $\ell$, 
to compute the smallest disk intersecting the 
disks 
on each side of $\ell$, 
and return the best result over all bipartitions. Since there are $O(n^2)$ such
partitions, and the smallest disk intersecting a set of disks can be found in
linear time \cite{LvK-LargestBounding-10}, this algorithm runs in $O(n^3)$ time.

We will present faster algorithms for the intersection problem.
We first introduce a related problem. 
For a real number $\delta \geq 0$ and a disk $D$,
the \emph{$\delta$-inflated} disk $D(\delta)$
is a disk concentric to $D$ and
whose radius is $r(D)+\delta$.
Consider the following decision problem:
\begin{quote}
Given a value
$\delta \geq 0$, are there two points $p_1$ and $p_2$
such that $D(\delta)\cap \{p_1,p_2\}\neq\emptyset$
for every $D\in \mathcal{D}$?
\end{quote}
This problem is related to our original problem in the following way. 
The above condition holds with $\delta$ if and only if  
the two disks centered at $p_1$ and $p_2$ with radius $\delta$ intersect 
all disks $D\in \mathcal{D}$.
Therefore the two disks centered at $p_1$ and $p_2$ with radius $\delta^*$ are 
a solution to the intersection problem, where $\delta^*$ is the minimum value
for which the answer to the decision problem is ``yes''.

\subsection{Decision Algorithm}
\label{sec:decision}
Given a value $\delta\geq 0$, we construct the arrangement of the
$\delta$-inflated disks $D_i(\delta), i=1 \dots n$ in the plane. This
arrangement consists
of $O(n^2)$ cells, each cell being a $0, 1$,  or $2$-face.
We traverse all the cells in the arrangement in a depth-first manner
and do the followings:
We place one center point, say $p_1$, in a cell.
The algorithm returns ``yes'' if all
the disks that do not contain $p_1$ have
a nonempty common intersection.
Otherwise, we move $p_1$ to a neighboring cell,
and repeat the test until we visit every cell.
This na\"ive approach leads to a running time $O(n^3)$:
we traverse $O(n^2)$ cells, and each cell can be handled
in linear time.

The following approach allows us to improve this running time by 
almost a linear factor. We consider a traversal of the arrangement of the
$\delta$-inflated disks by a path $\gamma$ that crosses only $O(n^2)$
cells, that is, some cells may be crossed several times, but
on average each cell is crossed $O(1)$ times. It can be
achieved by choosing the path $\gamma$ to be the Eulerian
tour of the depth-first search tree from the na\"ive approach.

While we move the center $p_1$ along $\gamma$ and traverse
the arrangement, we want to know whether the set of disks $\mathcal D'$
that do not contain $p_1$  have a non-empty intersection. To do
this efficiently, we use a segment tree~\cite{bcko-cgaa-08}. 
Each disk of $\mathcal D$
may appear or disappear several times during the traversal of
$\gamma$: each time we cross the boundary of a cell, one disk
is inserted or deleted from $\mathcal D'$. So each disk appears in $\mathcal D'$ along one or several segments of
$\gamma$. We store these segments in a segment tree.
As there are only $O(n^2)$ crossings with cell boundaries along
$\gamma$, this segment tree is built over a total of $O(n^2)$
endpoints and thus has total size $O(n^2 \log n)$: Each segment
of $\gamma$ along which a given disk of $\mathcal D$ is in
$\mathcal D'$ is inserted in $O(\log n)$ nodes of the segment
tree.
Each node $v$ of the segment tree stores a set
$\mathcal D_v \subseteq \mathcal D$ of input disks; from
the discussion above, they represent disks that do not
contain $p_1$ during the whole segment of $\gamma$ that is
represented by $v$. In addition, we store at node $v$
the intersection  $I_v=\bigcap \mathcal{D}_v$
of the disks stored at $v$. Each such intersection $I_v$
is a convex set bounded by $O(n)$ circular arcs, so
we store them as an array of circular arcs sorted along
the boundary of $I_v$. 
In total it takes $O(n^2 \log^2 n)$ time to compute the intersections $I_v$ for all nodes $v$ in the segment tree, since each disk is stored at $O(n \log n)$ nodes on average and the intersection of $k$ disks can be computed in $O(k \log k)$ time.


We now need to decide whether at some point, when $p_1$
moves along $\gamma$, the intersection of the disks in
$\mathcal D'$ (that is, disks that do not contain $p_1$)
is nonempty. To do this, we consider each leaf of the
segment tree separately. At each leaf, we test whether
the intersection of the disks stored at this leaf and all
its ancestors is non-empty. So it reduces to emptiness testing
for a collection of $O(\log n)$ circular
polygons with $O(n)$ circular arcs each.
We can solve this in $O(\log^2 n)$ expected time by
randomized convex programming~\cite{clarkson95,sharirwelzl92}, 
using $O(\log n)$ of the following primitive operations:
\begin{itemize}
\item[1.] Given $I_i, I_j$ and vector $a\in \R^2$, find  
the extreme point $v\in I_i\cap I_j$ that minimizes $a\cdot v$.
\item[2.] Given $I_i$ and a point $p$, decide whether $p\in I_i$.
\end{itemize}
We can also solve this problem in $O(\log^2 n \log \log n)$ time using
deterministic convex programming~\cite{chan98}. So we obtain the
following result:
\begin{lemma}\label{lem:decision_intersect}
  Given a value $\delta\geq 0$, we can decide in $O(n^{2}\log^2 n)$
  expected time or in $O(n^2\log^2 n\log\log n)$ worst-case time
   whether there exist two points such that every
  $\delta$-inflated disk intersects at least one of them.
\end{lemma}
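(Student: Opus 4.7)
The plan is to formalize the construction sketched above. First I construct the arrangement $\mathcal{A}$ of the $n$ inflated disks $D_i(\delta)$, which has combinatorial complexity $O(n^2)$, and traverse its cells along a path $\gamma$ obtained as an Eulerian tour of a DFS spanning tree of the dual graph of $\mathcal{A}$. This guarantees that $\gamma$ visits every cell while crossing only $O(n^2)$ cell boundaries in total, so each inflated disk $D_i(\delta)$ is ``absent'' from $p_1$'s current cell along a collection of sub-intervals of $\gamma$, with $O(n^2)$ sub-intervals in all.

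Next I build a segment tree $T$ over the endpoints of these sub-intervals, of total size $O(n^2\log n)$. Each sub-interval is stored canonically at $O(\log n)$ nodes, so $\sum_v |\mathcal{D}_v| = O(n^2\log n)$. At each node $v$ I compute the intersection $I_v = \bigcap \mathcal{D}_v$ and store it as a sorted array of circular arcs; since the intersection of $k$ disks can be computed in $O(k\log k)$ time, the total preprocessing cost is $O(n^2\log^2 n)$. The key invariant is that for every leaf $\lambda$, the absent set $\mathcal{D}'$ at the moment $p_1$ lies in $\lambda$ equals the disjoint union of the $\mathcal{D}_v$ over the $O(\log n)$ ancestors $v$ of $\lambda$; consequently $\bigcap \mathcal{D}'$ is nonempty at that moment if and only if $\bigcap_v I_v$ is nonempty.

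For each of the $O(n^2)$ leaves I therefore perform one emptiness query on $O(\log n)$ convex circular polygons. Applying randomized convex programming in the style of Clarkson and Sharir--Welzl (or the deterministic framework of Chan) using the two primitives listed above gives $O(\log^2 n)$ expected or $O(\log^2 n\log\log n)$ deterministic time per leaf, and summing over the $O(n^2)$ leaves yields the claimed bounds. The main obstacle will be implementing the two primitives in $O(\log n)$ time on the stored arc arrays: membership testing for a point in $I_v$ reduces to a binary search along $\partial I_v$, but the extreme-point primitive on $I_i\cap I_j$ requires a bridge-finding step on two sorted convex arc chains in the Kirkpatrick--Seidel style; once this is in place, the convex-programming bounds apply directly and the lemma follows.
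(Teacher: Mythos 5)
Your proposal follows essentially the same approach as the paper's proof: the same Eulerian-tour traversal of the inflated-disk arrangement, the same segment tree with precomputed intersections $I_v$ at the nodes, and the same per-leaf emptiness test via randomized (Clarkson/Sharir--Welzl) or deterministic (Chan) convex programming using the two listed primitives, yielding the stated bounds. The additional remarks on implementing the primitives (binary search for membership, bridge-finding for the extreme-point query) are consistent with, and slightly more explicit than, the paper's sketch.
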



\subsection{Optimization Algorithm}
The following lemma shows that the optimum $\delta^*$ can be found
in a set of $O(n^3)$ possible values. 
\begin{lemma}\label{lem:intersectionOrtangent}
  When $\delta=\delta^*$, $p_1$ or $p_2$ is a common boundary point
  of three $\delta^*$-inflated disks, a tangent point of two
  $\delta^*$-inflated disks or $\delta^*=0$.
\end{lemma}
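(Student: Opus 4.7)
The plan is to exploit convexity of the per-center subproblem. The case $\delta^*=0$ is immediate, so assume $\delta^*>0$. Starting from any optimal pair $(p_1,p_2)$, induce a partition $\mathcal{D}=\mathcal{D}_1\sqcup\mathcal{D}_2$ by placing each disk $D\in\mathcal{D}$ into $\mathcal{D}_i$ according to which of $p_1,p_2$ lies in $D(\delta^*)$ (breaking ties arbitrarily). For each side let $\delta_i$ be the minimum of the convex function $F_i(p)=\max_{D\in\mathcal{D}_i}(|p-c(D)|-r(D))$. Feasibility of $(p_1,p_2)$ gives $\delta_i\leq\delta^*$, and using the minimizers of $F_1,F_2$ as a candidate pair yields $\delta^*\leq\max(\delta_1,\delta_2)$, hence $\delta^*=\max(\delta_1,\delta_2)$. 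Without loss of generality $\delta^*=\delta_1$, and we may replace $p_1$ by a minimizer $\hat{p}_1$ of $F_1$ while keeping $p_2$; the pair $(\hat{p}_1,p_2)$ remains optimal.

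The heart of the argument is a first-order optimality analysis at $\hat{p}_1$. Let $A\subseteq\mathcal{D}_1$ be the active set, i.e., the disks $D$ with $|\hat{p}_1-c(D)|-r(D)=\delta^*$; these are precisely the disks whose $\delta^*$-inflated boundary passes through $\hat{p}_1$. Since $\delta^*>0$, each $f_D(p)=|p-c(D)|-r(D)$ is smooth at $\hat{p}_1$ with gradient equal to the unit vector $\hat{u}_D=(\hat{p}_1-c(D))/|\hat{p}_1-c(D)|$. Minimality of $\hat{p}_1$ for the max of these smooth convex functions forces $0\in\mathrm{conv}\{\hat{u}_D:D\in A\}$: otherwise some direction $v$ satisfies $v\cdot\hat{u}_D<0$ for every $D\in A$, and moving $\hat{p}_1$ slightly along $v$ strictly decreases the value of every active $f_D$, contradicting minimality.

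A short case analysis on $|A|$ then concludes. The case $|A|=1$ is impossible, since a single unit vector cannot have $0$ in its convex hull. If $|A|=2$, say $A=\{D,D'\}$, the condition forces $\hat{u}_D=-\hat{u}_{D'}$, so $\hat{p}_1$ lies on the segment between $c(D)$ and $c(D')$ at distances $r(D)+\delta^*$ and $r(D')+\delta^*$; hence $D(\delta^*)$ and $D'(\delta^*)$ are tangent at $\hat{p}_1$. If $|A|\geq 3$, $\hat{p}_1$ lies on the boundaries of three $\delta^*$-inflated disks. The only mildly delicate step is the replacement of $p_1$ by $\hat{p}_1$ in the opening paragraph, but it is harmless because the swap only affects the $\mathcal{D}_1$ side and $\delta_1=\delta^*$ is exactly the binding value; everything else is standard convex geometry of min-max of Euclidean distance functions.
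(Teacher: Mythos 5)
Your proof is correct, but it takes a genuinely different route from the paper. The paper argues by contradiction in three lines: if neither $p_1$ nor $p_2$ were a tangent point of two inflated disks or a common boundary point of three, then the intersection of the $\delta^*$-inflated disks containing $p_1$ (and likewise $p_2$) would have nonempty interior; picking interior points $p_1',p_2'$ lets one shrink $\delta$ below $\delta^*$, contradicting optimality when $\delta^*>0$. You instead split $\mathcal{D}$ by the witness points, reduce each side to the one-center problem $\min_p \max_D(|p-c(D)|-r(D))$, re-center $p_1$ at the minimizer of the binding side, and invoke the first-order condition $0\in\mathrm{conv}\{\hat u_D: D\in A\}$, finishing by case analysis on $|A|$. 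Two remarks on the comparison. First, the paper's perturbation argument applies to \emph{every} optimal witness pair, whereas yours establishes the conclusion only after replacing $p_1$ by the one-center of its side, i.e., an existential form of the lemma; that weaker form is all the candidate-enumeration over frustum-arrangement vertices actually needs, so nothing is lost for the algorithm, but it is worth being aware of the difference in strength. Second, your KKT-style analysis buys more structure than the paper's argument (antipodal contact directions force external tangency when $|A|=2$; the center lies in the convex hull of contact directions in general), which is the standard LP-type characterization and would also be the natural route to a degeneracy-free statement. Minor polish: in the separation step you should note that the inactive functions $f_D$, being strictly below $\delta^*$ at $\hat p_1$, remain so for a sufficiently small step along $v$, so the max genuinely decreases; and the degenerate situation where one side of the partition is empty is absorbed by your WLOG since that side's minimum cannot equal $\delta^*>0$.
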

\begin{proof}
 Suppose that this is not the case. 
 Then the common intersection
 of the disks containing $p_1$ has nonempty interior.
 Similarly, the common intersection of the disks containing $p_2$
 has nonempty interior.
 Let $p'_1$ and $p'_2$ be points in the interiors, one from each
 common intersection. Then there is a value $\delta'<\delta$
 satisfying $D(\delta')\cap \{p'_1,p'_2\}\neq\emptyset$ for every
 $D\in{\mathcal D}$. But we also assumed that $\delta^* \not= 0$.
\end{proof}


\subsection*{Finding $\delta^*$}\label{app:findingdelta}
Due to Lemma~\ref{lem:intersectionOrtangent} we consider only discrete values of $\delta$
for which one of the events defined in Lemma~\ref{lem:intersectionOrtangent}
occurs.
Whether $\delta^*=0$ can be tested  with the decision algorithm in $O(n^{2}\log^2 n)$
  expected time or in $O(n^2\log^2 n\log\log n)$ worst-case time.
So from now on, we assume that $p_1$ or $p_2$ is a common boundary point of three $\delta$-inflated disks or a tangent
point of two $\delta$-inflated disks.


In order to compute all possible values for $\delta$, we
construct a frustum $f_i\in \R^3$ for each disk $D_i\in \mathcal{D}$.
The bottom base of the frustum $f_i$ is $D_i$
lying in the plane $z=0$. The intersection
of $f_i$ and the plane $z=\delta$ is $D_i(\delta)$.
The top base of $f_i$ is $D_i(\delta_\text{max})$, where
$\delta_\text{max}$ is the minimum radius of the disk intersecting all disks in $\mathcal{D}$.
Clearly, the optimal value of $\delta$ is in
$[0, \delta_\text{max}]$.\\

\subsubsection{Event points and their corresponding radii.}
Consider the case that $p_1=(x,y)$ is the common boundary point of the disks
$D_i(\delta)$, $D_j(\delta)$, and $D_k(\delta)$ in the plane.  Then the
point $p'=(x,y,\delta)$ is the common boundary point of three frustums
$f_i$, $f_j$, and $f_k$.  Consider now the case that $p_1=(x,y)$ is
the tangent point of $D_i(\delta)$ and $D_j(\delta)$. Then the point
$p'=(x,y,\delta)$ is the point with the smallest $z$-value on the
intersection curve of $f_i$ and $f_j$.  We call such a point the
\emph{tangent point} of two frustums.  Hence, in order to find the points
$p_1$ and $p_2$, all the tangent points and the common boundary  points of the
frustums have to be considered. There are $O(n^2)$ tangent points and
$O(n^3)$ common boundary points, therefore there are $O(n^3)$ candidates for
the point $p_1$ in total (note that for each candidate for $p_1$, the
corresponding value for $\delta$ is obtained, namely the height of $p_1'$).
Thus, a na\"ive way to find the minimum value $\delta$ such that there exists
two points $p_1,p_2$ that fulfil the conditions, is to test all candidate
$\delta$ values. For each possible $\delta$ value, we can determine
if there are two points $p_1, p_2$ such that all $D(\delta)$ are intersected
by $p_2$ or $p_1$ (as argued above). The solution is the smallest
value $\delta^*$ at which the decision algorithm in
Section~\ref{sec:decision} returns ``yes''.
This leads to
a running time of $O(n^{5}\log^2 n)$ expected time or $O(n^5\log^2 n \log \log n)$ deterministic time.  

In order to improve the running time we use an implicit binary search.

\subsubsection{Implicit binary search.}
We perform an
implicit binary search on the $\delta$ values corresponding to these common boundary points.
As argued above, $p_1$ is the projection of a point $p'$ which is a tangent point of two frustums or a
common boundary point of three frustums, i.e., a vertex of the arrangement
$\mathcal{A}$ of the $n$ frustums $f_1,\dots f_n$; the complexity of
$\mathcal{A}$ is $O(n^3)$.
We now describe how to perform the binary search over the vertices of
$\mathcal{A}$ in an implicit way:

\paragraph{Binary search on a coarse list of events.}
We first 
consider $O(n^2)$ pairs of frustums and compute the tangent point
of each pair. 
Then we randomly select $O(n^2 \log n)$ triples of frustums and compute 
the common boundary point of each triple. Since $\delta^* \in [0, \delta_\text{max}]$, we only consider points whose $z$-value is in this interval. Clearly, these points are vertices of $\mathcal{A}$ and hence we randomly select $O(n^2 \log n)$ vertices from $\mathcal{A}$. We
sort their radii associated with them in $O(n^2 \log^2 n)$ time. By a binary
search with the decision algorithm in Section~\ref{sec:decision},
we determine two consecutive radii
$\delta_i$ and $\delta_{i+1}$ such that $\delta^*$ is between
$\delta_i$ and $\delta_{i+1}$. This takes $O(n^{2}\log^3 n)$ time. Since the
vertices were picked randomly, the strip $W[\delta_i, \delta_{i+1}]$
bounded by the two planes $z:= \delta_i$ and $z:= \delta_{i+1}$ contains
only $k= O(n)$ vertices of
$\mathcal{A}$ with high probability \cite[Section~5]{mulmuley1994}.

\paragraph{Zooming into the interval.}
We compute all
the $k$ vertices in $W[\delta_i,\delta_{i+1}]$ by a standard sweep-plane
algorithm in $O(k\log n + n^2\log n)$ time as follows: First, we compute
the intersection of the sweeping plane at $z:= \delta_i$ with the frustums
$f_1,\dots, f_n$. This intersection forms a two-dimensional arrangement of $O(n)$
circles with $O(n^2)$ total complexity, and we can compute it in $O(n^2\log
n)$ time. We next construct the portion of the arrangement $\mathcal{A}$ in
$W[\delta_i,\delta_{i+1}]$ incrementally by sweeping a plane orthogonal to 
$z$-axis from
the intersection at $z:=\delta_i$ towards $z:= \delta_{i+1}$. As a result,
we can compute the $k=O(n)$ vertices (and the corresponding $O(n)$ radii)
in $W[\delta_i,\delta_{i+1}]$ in $O(n\log n)$ time.  We abort the sweep if
the number $k$ of vertices inside the strip becomes too large and restart
the algorithm with a new random sample. This happens only with small
probability.  In order to find the minimum value $\delta^*$,
we perform a binary search on these $O(n)$ radii we just computed,
using the decision algorithm in Lemma~\ref{lem:decision_intersect}.
This takes $O(n^{2}\log^3 n)$ expected
time. The solution pair of points $p_1$ and $p_2$ can also be found
by the decision algorithm.
\medskip

To get a deterministic algorithm, we use the parametric search technique, with the deterministic decision 
algorithm of Lemma~\ref{lem:decision_intersect}. As the generic algorithm, we use an algorithm that computes 
in $O(\log n)$ time the arrangement of the inflated disks using $O(n^2)$ processors~\cite{agarwal1994}, so 
we need to run the decision algorithm $O(\log^2 n)$ times, and
the total running time becomes $O(n^2 \log^4 n\log\log n)$.
\begin{theorem}
  Given a set $\mathcal{D}$ of $n$ disks in the plane, we can compute
  two smallest congruent disks whose union intersects every disk in
  $\mathcal{D}$ in $O(n^{2} \log^3 n)$ expected time, and 
  in $O(n^2 \log^4 n \log\log n)$ deterministic time.
\end{theorem}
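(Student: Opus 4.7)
The plan is to combine the characterization of Lemma~\ref{lem:intersectionOrtangent} with the decision procedure of Lemma~\ref{lem:decision_intersect} through an implicit binary search over the candidate values of $\delta^*$. After lifting each disk $D_i$ to the frustum $f_i \subset \R^3$ whose horizontal slice at height $z=\delta$ equals $D_i(\delta)$, Lemma~\ref{lem:intersectionOrtangent} guarantees that, once the degenerate case $\delta^*=0$ has been ruled out by a single call to the decision algorithm, $\delta^*$ is the $z$-coordinate of some vertex of the arrangement $\mathcal{A}$ of $\{f_1,\dots,f_n\}$. The arrangement has $\Theta(n^3)$ vertices, so enumerating them is too expensive; I would instead locate $\delta^*$ while only touching a nearly $n^2$-size subset.

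For the randomized bound I would proceed in two stages. First, compute all $O(n^2)$ pair-tangency heights together with a random sample of $O(n^2\log n)$ triple-intersection heights, sort these values in $O(n^2\log^2 n)$ time, and binary-search through them with the randomized decision algorithm at a cost of $O(n^2 \log^3 n)$. This confines $\delta^*$ to a slab $W=[\delta_i,\delta_{i+1}]$ between two consecutive sampled heights. By the standard random-sampling theorem for arrangements~\cite[Section~5]{mulmuley1994}, with high probability only $k = O(n)$ vertices of $\mathcal{A}$ lie in $W$. In the second stage I would compute the planar sectional arrangement of the frustums at $z=\delta_i$ in $O(n^2\log n)$ time, then sweep a horizontal plane through $W$ to list its $k$ vertices at the additional cost of $O(k\log n)$, aborting and resampling if $k$ exceeds its expected bound (a rare event that contributes only constant expected overhead). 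A final binary search on these $O(n)$ candidate radii using Lemma~\ref{lem:decision_intersect} then pins down $\delta^*$ in another $O(n^2\log^3 n)$, giving the claimed expected bound.

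For the deterministic bound I would plug the deterministic version of Lemma~\ref{lem:decision_intersect} into Cole's parametric search with $\delta$ as the search parameter, using as the generic parallel subroutine the Agarwal--Matou\v{s}ek algorithm~\cite{agarwal1994} that builds the arrangement of $n$ inflated disks in $O(\log n)$ parallel time on $O(n^2)$ processors. Simulating its $O(\log n)$ parallel steps serially and resolving comparisons by binary search against the deterministic decision algorithm incurs $O(\log^2 n)$ decision calls, each at cost $O(n^2\log^2 n\log\log n)$, totalling $O(n^2\log^4 n\log\log n)$. The main obstacle I anticipate is in the randomized stage: namely, justifying cleanly that a sample of $O(n^2\log n)$ vertices of the frustum arrangement really forces the resulting slab $W$ to contain only $O(n)$ vertices with probability high enough for the resampling cost to remain a constant factor in expectation. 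This requires casting the frustums as surfaces of bounded description complexity and invoking the sampling bounds of~\cite{mulmuley1994} in a form that respects the vertical ordering induced by $z$-coordinates; the remaining arithmetic on $\log$-factors is then routine.
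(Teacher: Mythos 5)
Your proposal is correct and follows essentially the same route as the paper: lifting to frustums, ruling out $\delta^*=0$ with one decision call, a randomized implicit binary search via a sample of $O(n^2\log n)$ arrangement vertices plus all pair-tangency heights followed by a plane sweep over the resulting slab, and parametric search with the parallel arrangement construction of~\cite{agarwal1994} for the deterministic bound. The sampling concern you raise is handled in the paper exactly as you suggest, by citing the random-sampling bounds of~\cite{mulmuley1994}.
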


\section{Covering Disks with Two Centers}
In this section we consider the following 
covering problem:
Given a set of disks $\mathcal{D}=\{D_1, \dots, D_n\}$, compute two smallest congruent disks $C_1$ and $C_2$ such that each disk $D\in \mathcal{D}$ is covered by $C_1$ or $C_2$.
In the \emph{general  case}, a disk $D\in\mathcal{D}$ must be covered by $C_1\cup C_2$.
In the \emph{restricted  case}, each disk $D\in\mathcal{D}$ has to be fully covered by $C_1$ or by $C_2$.

\subsection{The General Case}

We first give a characterization of the optimal covering.
The optimal covering of a set $\mathcal D'$ of disks by one disk is determined by
at most three disks of $\mathcal D'$ touching the optimal covering disk such that the convex hull of the contact points contains the center of the 
covering disk. (See Figure~\ref{fig:GeneralOptimal}(a).) 
\begin{figure}[ht]
\centering
 \includegraphics[scale=0.6]{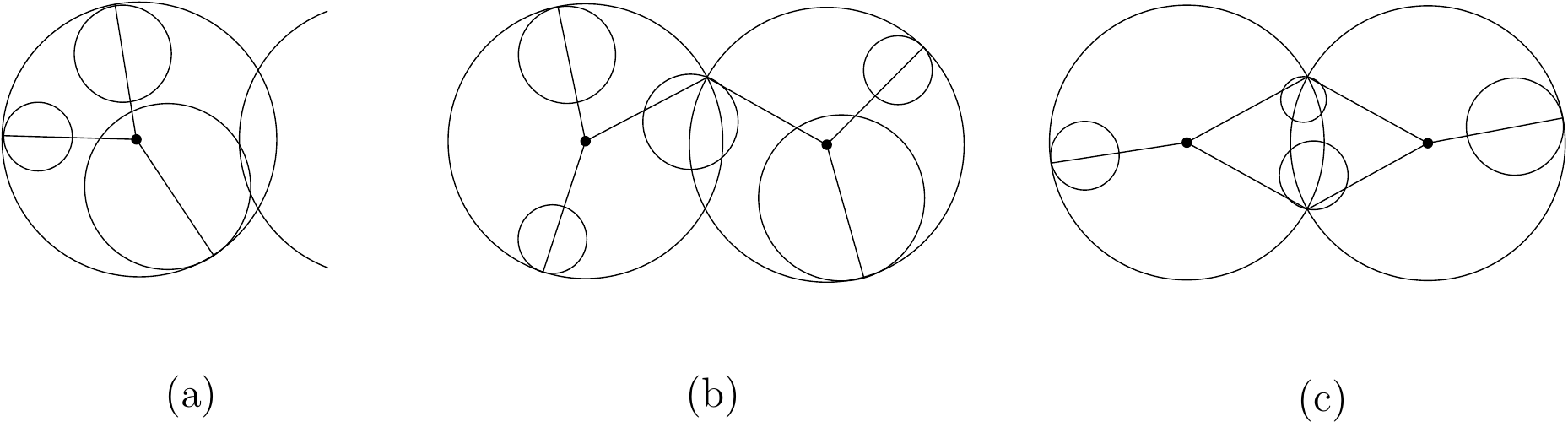}
\caption{The three configurations for the optimal 2-center covering of disks.}\label{fig:GeneralOptimal}
\end{figure}

When covering by two disks, a similar argument applies,
and thus the optimal covering disks $(C_1^*, C_2^*)$ are determined by at most
five input disks. 
\begin{lemma}\label{lem:generalOptimal}
The optimal covering by two disks $C_1^*, C_2^*$ satisfies one of
the following conditions.
\begin{enumerate}
\item For some $i \in \{1,2\}$, the disk $C_i^*$ is the optimal one-covering  of the
	disks contained in $C^* _i$, as in
  Figure~\ref{fig:GeneralOptimal}(a).
\item There is an input disk that is neither fully contained in $C_1^*$ nor
  in $C_2^*$, but contains one point of $\partial C_1^*\cap \partial C_2^*$ in its boundary as in
  Figure~\ref{fig:GeneralOptimal}(b).
\item There are two input disks $D_i, D_j$, possible $i=j$, none of them being fully covered by $C_1^*$ or $C_2^*$, such that $D_i$ contains one point of $\partial C_1^*\cap \partial C_2^*$ and $D_j$  contains the other 
  point of $\partial C_1^*\cap \partial C_2^*$ in their boundaries as in
  Figure~\ref{fig:GeneralOptimal}(c).
\end{enumerate}
In all cases, each covering disk $C^*$ is determined by at most three 
disks whose contact points contain the center $c(C^*)$ in their convex hull.
\end{lemma}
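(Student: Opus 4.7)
My plan is to take an optimal pair $(C_1^*, C_2^*)$ of common radius $r^*$ and show by a rigidity argument that at least one of the three configurations must hold. The pair has five free parameters (the two centers in $\R^2$ and the common radius), so if fewer than five independent active constraints are present, one can perturb $(C_1^*, C_2^*)$ while strictly decreasing $r^*$, contradicting optimality. The goal is to enumerate where these constraints can come from.

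Partition the input set into $\mathcal{D}_1$ (disks fully contained in $C_1^*$), $\mathcal{D}_2$ (disks fully contained in $C_2^*$, with ties broken arbitrarily), and $\mathcal{D}_{12}$ (crossing disks that are covered by $C_1^*\cup C_2^*$ but not by either $C_i^*$ alone). If $\mathcal{D}_{12}=\emptyset$, the problem decouples into two independent one-covering problems, and by optimality each $C_i^*$ must be the smallest disk enclosing $\mathcal{D}_i$; this is case~1, and the concluding ``convex hull of contact points contains $c(C^*)$'' statement follows from the standard rigidity characterization of smallest enclosing disks applied separately to $\mathcal{D}_1$ and $\mathcal{D}_2$.

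If $\mathcal{D}_{12}\neq\emptyset$, set $\{p_1,p_2\}:=\partial C_1^*\cap\partial C_2^*$ (at most two such intersection points). A crossing disk $D\in\mathcal{D}_{12}$ can contribute to rigidity either by ``pinning'' an intersection point (i.e., $p_i\in\partial D$), or by making one of its caps tangent to $\partial C_1^*$ or $\partial C_2^*$ from the corresponding side. I would show by a perturbation argument that unless at least one of $p_1,p_2$ is pinned by some disk of $\mathcal{D}_{12}$, a joint shrinking of $C_1^*$ and $C_2^*$ is possible. Depending on whether one or both intersection points end up pinned, and whether a single crossing disk pins both or two different crossing disks pin one each, we land in case~2 (one point pinned) or case~3 (both pinned, with $i=j$ or $i\neq j$). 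In each case, applying the classical smallest-enclosing-disk rigidity to $C_i^*$ individually---with the disks of $\mathcal{D}_i$ plus the pinning points on $\partial C_i^*$ treated as constraints---yields the final ``at most three contact disks with $c(C_i^*)$ in their convex hull'' clause.

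The main obstacle will be carrying out the perturbation count carefully, especially in case~2 where only one intersection point is pinned. There, the constraint $p_1\in\partial D$ for some crossing $D$ leaves a multi-parameter family of deformations of $(C_1^*,C_2^*)$ that preserve both the pinning and the common radius, and the analysis must verify that contacts between disks in $\mathcal{D}_1\cup\mathcal{D}_2$ and the boundaries $\partial C_1^*,\partial C_2^*$ absorb the remaining degrees of freedom; otherwise one could shrink $r^*$. This rigidity bookkeeping, together with the combinatorial split into cases according to how many of $p_1,p_2$ are pinned and by how many disks, is the technical heart of the proof.
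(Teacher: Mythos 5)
Your plan follows essentially the same route as the paper's proof: a local-optimality perturbation/rigidity argument in which the case with no crossing disks gives case~1, and otherwise some crossing disk must pin an intersection point of $\partial C_1^*$ and $\partial C_2^*$ (else both congruent disks could be shrunk), with the split into cases~2 and~3 according to whether one or both intersection points are pinned and the final contact-point clause coming from the standard smallest-enclosing-disk rigidity; the paper leaves the same perturbation bookkeeping at sketch level that you flag as the technical heart. One minor slip: when $\mathcal{D}_{12}=\emptyset$, only \emph{one} of the two disks need be the optimal one-covering of its part (the other may have slack), which is exactly what case~1 requires, so your claim that \emph{each} $C_i^*$ is the smallest enclosing disk of $\mathcal{D}_i$ is stronger than what is true or needed.
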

\begin{proof}
The optimal solution is a pair of congruent disks that achieves
a local minimum in radius, that is, we cannot
reduce the radius of the covering disks by translating them locally.
If one covering disk is completely determined by the input disks contained
in it, then this belongs to case 1.
Otherwise, there always exists at least one input disk $D$ such that
$D$ is not contained in $C_i^*$ for all $i\in\{1,2\}$. Moreover such input disks always
touch $C_1^*\cup C_2^*$ from inside at the intersection points of
$\partial C_1^*$ and $\partial C_2^*$, otherwise we can always get a pair 
of smaller congruent covering disks.
If only one point of $\partial C_1^*\cap\partial C_2^*$ is touched by
an input disk $D$,  both covering disks are determined by at most two
additional disks touching from
inside
together with $D$ because the covering disks are congruent.
If both intersection points of $\partial C_1^*\cap\partial C_2^*$ are touched by input disks $D_i$ and $D_j$, possible $i=j$, 
one covering disk is determined by one additional
disk and the other covering disk by at most one additional disk
touching from inside together with $D_i$ and $D_j$
because the covering disks are congruent.
It is not difficult to see that there are two or three touching points of each covering
disk that make radial angles at most $\pi$; otherwise we can get a pair of
smaller congruent covering disks.
\end{proof}

Using a decision algorithm and the parametric search technique, we can construct an exact algorithm for the general covering problem. 

Let $r^*$ be the radius of an optimal solution for the general case of covering by two disks.
We describe a decision algorithm based on the following lemma that, for a given $r>0$, returns
``yes'' if $r\geq r^*$, and ``no'' otherwise. (See also Figure~\ref{fig:GeneralCases}).

\begin{lemma}
\label{lem:general.cases}
Assume that $r\geq r^*$. Then there exists
a pair of congruent disks $C_1,C_2$ of radius $r$ such that
their union contains the input disks, an input disk $D$ touches $C_1$
from inside, and one of the following property holds.
\begin{itemize}
\item[(a)] $C_1$ is identical to $D$.
\item[(b)] There is another input disk touching $C_1$ from inside.
\item[(c)] There is another input disk $D'$ such that $D'$ is not contained
  in $C_2$, but it touches a common intersection $t$ of $\partial C_1$ and $\partial C_2$
  that is at distance $2r$ from the touching
  point of $D$. If this is the case, we say that $D$ and $t$ are \emph{aligned
  with respect to $C_1$}.
\item[(d)] There are two disks $D_i$ and $D_j$, possibly $i=j$, such that
  $D_i$ touches a common intersection of $\partial C_1$ and $\partial C_2$,
  and $D_j$ touches the other common intersection of
  $\partial C_1$ and $\partial C_2$.
\end{itemize}
\end{lemma}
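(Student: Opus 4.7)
The plan is to take any valid pair $(C_1, C_2)$ of radius $r$---which exists since $r \geq r^*$---and continuously deform it in the four-dimensional configuration space of center positions, preserving validity but progressively picking up contact constraints, until the configuration matches one of the patterns (a)--(d). Each stage of deformation moves along a submanifold of valid configurations that preserve the already-established contacts and stops at the first new boundary event.

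First, I would fix $C_2$ and translate $c(C_1)$ in an arbitrary direction. Because the set of valid positions of $c(C_1)$ is closed in $\R^2$, the translation reaches a boundary position, where by continuity one of two events occurs: either (i) some input disk $D$ not fully contained in $C_2$ becomes internally tangent to $\partial C_1$---because a point of $\partial D$ is about to escape $C_1\cup C_2$ through $\partial C_1$---or (ii) some input disk $D'$ lying in neither $C_1$ nor $C_2$ has its boundary pass through a corner $t\in \partial C_1\cap \partial C_2$, the only place through which a disk straddling both covering disks can start to exit their union. If (i) occurs and $r(D)=r$, then $C_1=D$ and we are in case (a); otherwise $c(C_1)$ becomes pinned to a circle of radius $r-r(D)$ around $c(D)$, leaving three degrees of freedom.

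With $D$ internally tangent to $C_1$ at a point $q$, I would continue deforming by simultaneously rotating $c(C_1)$ around $c(D)$ (to preserve the tangency) and pushing $c(C_2)$ outward from $c(C_1)$ along the ray through $q$. The next boundary event is again of two types: either a second input disk becomes internally tangent to $C_1$, giving case (b), or a corner $t\in\partial C_1\cap\partial C_2$ starts to leave some input disk $D'\not\subseteq C_2$. In the second sub-case, two degrees of freedom remain, and I would push within them along a direction that increases $|qt|$. Since $q,t\in\partial C_1$, we have $|qt|\leq 2r$, with equality exactly when $c(C_1)$ lies on segment $qt$, i.e.\ when $t$ is the antipode of $q$ on $\partial C_1$. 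The next event in this push is then either a new internal tangency (case (b), possibly after swapping the roles of $C_1$ and $C_2$), a second corner-touch by some input disk at the other intersection of $\partial C_1$ and $\partial C_2$ (case (d)), or the limit $|qt|=2r$ itself, which is precisely the aligned configuration of case (c). Event (ii) at the very first step is handled symmetrically: one starts with the corner-touch constraint already active and runs the same deformation argument.

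The main obstacle will be the argument in the third stage: one must verify that on the two-dimensional manifold of valid configurations with both $D$ tangent to $C_1$ and $D'$ touching $t$, the function $|qt|$ has nonvanishing gradient in the interior, and that its supremum $2r$ is genuinely attained (in the limit $C_1$ and $C_2$ become externally tangent at the antipode of $q$) rather than merely approached. This requires an explicit analysis of how $t$ depends on $(c(C_1),c(C_2))$ via the circular geometry of the covering disks, together with a careful argument that no intervening events force termination before this limit can be reached.
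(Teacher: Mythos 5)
Your overall strategy---continuously deform a valid pair of radius-$r$ disks, accumulating contact constraints until one of the configurations (a)--(d) is reached---is the same as the paper's, but your event analysis does not close. The trouble is in stages 2 and 3, where your motions move $C_2$ (pushing $c(C_2)$ outward, then increasing $|qt|$): a stopping event can then be an input disk $D''$ becoming internally tangent to $C_2$ while $D$ is tangent to $C_1$ and $D'$ sits on the corner $t$. In stage 2 this event type is simply missing from your list, and in stage 3 you classify any ``new internal tangency'' as case (b) ``possibly after swapping the roles of $C_1$ and $C_2$''; but case (b) requires \emph{two} input disks tangent to the \emph{same} covering disk, and the configuration (one disk tangent to $C_1$, a different disk tangent to $C_2$, a third on one corner) is none of (a)--(d), so your deformation can terminate in a state the lemma does not cover. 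This is exactly the hard sub-case in the paper's proof: it is handled by a further simultaneous rotation of $C_1$ around $D$ and of $C_2$ around $D'$ that keeps the third disk on the (moving) corner; because at that point each covering disk already carries its own tangent input disk, every stopping event of that motion is terminal---a new tangency gives (b) (after swapping if it occurs on $C_2$), hitting the other corner gives (d), and alignment gives (c). Your plan has no counterpart of this step, and without it the argument does not land in one of the listed cases.

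A second issue is the obstacle you flag yourself in stage 3; beyond being unproven, the picture behind it is off. Alignment only requires $t$ to be antipodal to the tangency point $q$ on $\partial C_1$; it does not require $C_1$ and $C_2$ to become externally tangent, and steering toward external tangency shrinks the lens $C_1\cap C_2$, so the straddling disk $D'$ would typically be forced out through the corner before $|qt|=2r$ is reached. The paper avoids any such potential-function argument by using explicit one-parameter rotations about already-established contact points (rotate $C_1$ around $D$, then $C_2$ around $t$, then both simultaneously), so that at each stage only well-understood event types can occur. Finally, case (c) requires $D'\not\subseteq C_2$, which your deformation never verifies (the paper addresses this with its remark that $D_i$ is not contained in the disk of radius $2r-r(D)$ centered at $c(D)$). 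These gaps, chiefly the missing two-disk rotation step, mean the proposal as written does not yet prove the lemma.
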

\begin{proof}
Let $c_1^*$ and $c_2^*$ be the centers of the optimal solution. Imagine that
we place two disks at $c_1^*$ and $c_2^*$ with radius larger than $r^*$, respectively.
If $C_1$ is already identical to an input disk $D$, it belongs to case (a).
Otherwise we translate $C_1$ towards $C_2$ until it hits an input disk $D$.
Then we rotate $C_1$ around $D$ in clockwise orientation
maintaining $D$ touching $C_1$ from inside
until the union of $C_1$ and $C_2$ stops covering the input.
If this event is caused by another disk touching $C_1$ from inside,
it belongs to case (b).
Otherwise the event is caused by another disk $D_i$ that is hit
by one of two common intersections of $\partial C_1$ and $\partial C_2$
at $t$. If $D$ and $t$ are aligned with respected to $C_1$,
it belongs to case (c).

Otherwise, we rotate $C_2$ around $t$ in counterclockwise until the
union of $C_1$
and $C_2$ stops covering the input. If this event is caused by
another disk $D_j$ that is hit by the common intersection of
$\partial C_1$ and $\partial C_2$, other than $t$, then it belongs to case (d). Otherwise the event is caused by another disk $D'$
touching $C_2$ from inside. Thus, $D$ touches $C_1$ from inside,
$D'$ touches $C_2$ from inside, and $D_i$ touches the common intersection
$t$ of $\partial C_1$ and $\partial C_2$. Imagine that we
rotate $C_1$ slightly further around $D$ in clockwise.
We also rotate $C_2$ around $D'$ simultaneously such that
$D_i$ and the rotated copies keep maintaining a common intersection
along their boundaries during the rotation. 
Let $t$ denote the common intersection.
We rotate $C_1$ and $C_2$ in such a way until we encounter an event
(1) that another disk $D_j$ touches $C_1$ or $C_2$, (2) that $D_j$ touches
the other common intersection of $\partial C_1$ and $\partial C_2$,
or (3) that $D$ and $t$ are aligned with respect to $C_1$ or
$D'$ and $t$ are aligned with respect to $C_2$.
Note that if the event is of type (3), then $D_i$ is not
contained in the disk centered at $c(D)$ with radius $2r-r(D)$ or
is not contained in the disk centered at $c(D')$ with radius
$2r-r(D')$ as in Figure~\ref{fig:GeneralCases}. 
\end{proof}
\begin{figure}[ht]
\centering
 \includegraphics[scale=0.8]{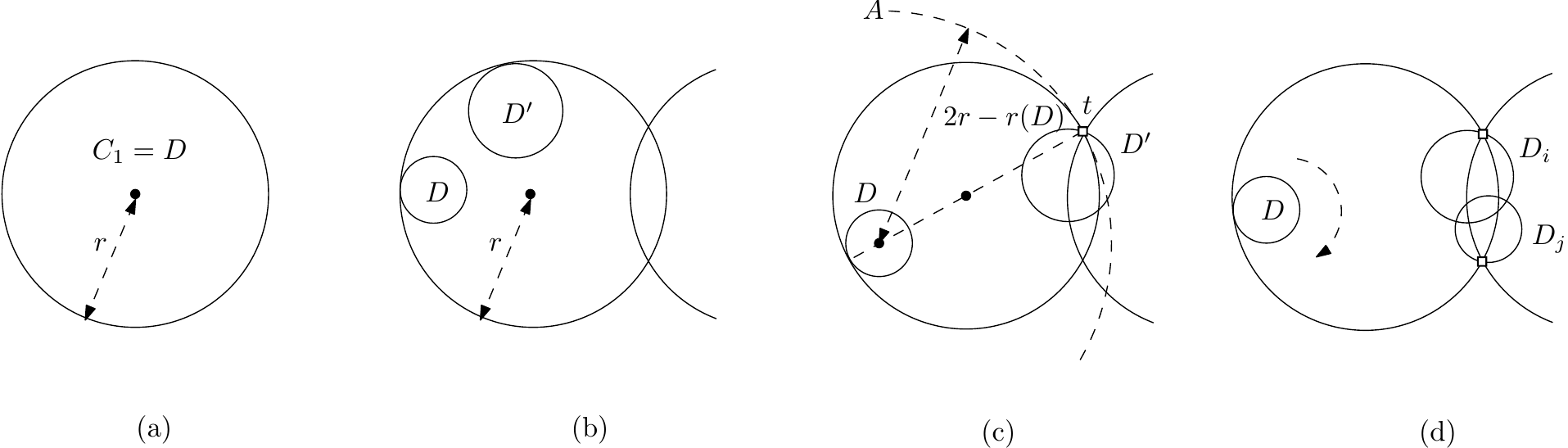}
\caption{Four cases for $r\geq r^*$.}\label{fig:GeneralCases}
\end{figure}

\subsubsection{Decision  Algorithm.}\label{ssec:generalDecision}
The cases are enumerated as in Lemma~\ref{lem:general.cases}.\\
\emph{Case (a).} Choose an input disk $D$. $C_1$ has radius $r$ and covers only $D$. Then $C_2$ is the smallest disk containing $\mathcal{D} \setminus D$. If the radius of $C_2$ is $\leq r$, we return ``yes".
\\
\emph{Case (b).} We simply choose a pair of input disks $D$ and $D'$. There are two candidates for $C_1$, 
as $C_1$ has radius $r$ and touches $D$ and $D'$. So we consider separately each of the two candidates for $C_1$. 
Then $C_2$ is chosen to be the smallest disk containing the input disks, or the 
portions of input disks (crescents) that are not covered by $C_1$, which can be
done in $O(n)$ time. If for one of the two choices of $C_1$, the corresponding
disk $C_2$ has radius $\leq r$, we return ``yes''.
\\
\emph{Case (c).} For each input disk $D$, we do the following.
\begin{enumerate}
\item For the circle $A$ with center $c(D)$ and radius $2r-r(D)$, compute $A\cap D'$ for every other disk $D'$. Let $t$ be such an intersection point.
\item For each $t$,
\begin{enumerate}
\item remove (part of) the input disks covered by the covering
disk determined by $D$ and $t$, and compute the smallest disk covering the remaining input.
\item If this algorithm returns a covering disk with radius $\leq r$, return ``yes''.
\end{enumerate}
\end{enumerate}

\emph{Case (d).} For each input disk $D$ that touches $C_1$ from
inside, we do the following.
Let $i$ be the index of the first input disk that the
circular arc of $C_1$ from the touching point hits in clockwise
orientation. Let $j$ be the index of the last input disk that the circular
arc leaves.
We claim that the number of pairs of type $(i,j)$ is $O(n)$.

This claim can be easily proved by observing that,
while we rotate $C_1$ around an input disk $D$
in clockwise orientation,
$C_1$ sweeps the plane and the input disks in such a manner
that the first input disk intersected by the arc of $C_1$ from
the tangent point in clockwise orientation changes only $O(n)$ times;
To see this, consider the union of the input
disks, which consists of $O(n)$ circular arcs.
The last input disk intersected also changes $O(n)$ times. 
So the pairing along the rotation can be
done by scanning two lists (first and last) of disks.
For each pair $(i,j)$, we still have some freedom of rotating $C_1$ around $D$ within some interval ($C_2$ changes accordingly.) During the rotation, an input disk
not covered by the union of $C_1$ and $C_2$ may become fully covered by
the union,
or vice versa. We call such an event an \emph{I/O event}. Note that an I/O event occurs only when an input disk touches $C_1$ or $C_2$ from inside.
Again, we claim that the number of I/O events for each pair $(i,j)$ is $O(n)$.

For this claim, consider a pair $(i,j)$.
During the rotation, the first intersection point moves along
the boundary of disk $D_i$ and the last intersection point moves
along the boundary of disk $D_j$.
Therefore, the movement of $C_2$
is determined by these two intersection points.
Clearly $C_1$ has at most $2(n-1)$ I/O events. For $C_2$, the trajectory
of its center is a function graph which ``behaves well'' --
Since it is a function on the radii of disks $D_i$ and $D_j$, and their
center locations, it is not in a complicated form (and its
degree is low enough) that there are only $O(n)$ events.

We compute all I/O  events and sort them. At the beginning of the rotation of $C_1$ around $D$, 
we compute the number of input disks that are not fully covered, and set the variable \emph{counter} 
to this number. Then we handle I/O events one by one and update the counter. 
If the counter becomes 0, we return ``yes''.

In total, case (d) can be handled in $O(n^3\log n)$ time.

\begin{lemma}
Given a value $r>0$, we can decide in $O(n^3\log n)$ time whether there exists two disks with radius $r$ that cover a set of given disks in the plane.
\end{lemma}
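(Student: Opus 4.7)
The plan is to test the structural conditions of Lemma~\ref{lem:general.cases} one by one: any feasible pair $(C_1,C_2)$ of radius $r\geq r^*$ must fall into one of the four cases described there, so it suffices to search a polynomial candidate space within each case and verify feasibility by a single-disk cover on the ``crescents'' of input disks not yet covered, which can be done in linear time by known one-center algorithms~\cite{LvK-LargestBounding-10}.

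For cases (a)--(c) the candidate space is explicit and small. In case (a), try each of the $n$ input disks as $C_1$ and run the one-center algorithm on the rest, giving $O(n^2)$. In case (b), for each of the $O(n^2)$ pairs $(D,D')$ of input disks touching $C_1$ from inside, there are at most two candidate positions for a disk $C_1$ of radius $r$ tangent to both; each is checked in $O(n)$ time, for $O(n^3)$ total. In case (c), for each input disk $D$ intersect the circle of center $c(D)$ and radius $2r-r(D)$ with every other input disk to obtain $O(n)$ candidate positions for the intersection point $t$; the pair $(D,t)$ determines $(C_1,C_2)$, and the feasibility check costs $O(n)$, giving $O(n^3)$ in total.

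The main obstacle is case (d), where the optimal pair is not pinned down by three input disks but instead lies in a one-parameter family obtained by rotating $C_1$ about an input disk $D$ that touches it from inside. My plan is, for each of the $n$ choices of $D$, to establish two combinatorial bounds. First, the ``first-hit/last-hit'' pair of indices $(i,j)$ along the moving boundary arc of $C_1$ changes only $O(n)$ times during the rotation, because the relevant events live on the boundary of $\bigcup \mathcal{D}$, which has complexity $O(n)$ and can be traversed by scanning two sorted lists of disks. Second, for each fixed pair $(i,j)$ the motion of $C_2$ is a low-degree algebraic function of the rotation parameter and of the centers and radii of $D_i, D_j$, so at most $O(n)$ ``I/O events'' can occur, where an input disk transitions between being covered and uncovered by $C_1\cup C_2$. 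Together these give $O(n^2)$ events per choice of $D$ and $O(n^3)$ events overall.

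Finally, to detect whether any configuration within case (d) is feasible, I would sort the $O(n^2)$ events arising for each $D$ in $O(n^2\log n)$ time, initialize a counter with the number of input disks not covered by $C_1\cup C_2$ at the start of the rotation, and process events in order, updating the counter at each I/O event and returning ``yes'' whenever it reaches zero. Summed over the $n$ choices of $D$, case (d) costs $O(n^3\log n)$, which dominates cases (a)--(c) and matches the claimed bound.
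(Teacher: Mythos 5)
Your proposal follows the paper's own proof essentially verbatim: the same case analysis from Lemma~\ref{lem:general.cases}, the same $O(n^2)$, $O(n^3)$, $O(n^3)$ treatments of cases (a)--(c) via a linear-time one-center computation on the uncovered disks or crescents, and the same two claims for case (d) (only $O(n)$ first-hit/last-hit pairs $(i,j)$ per pivot disk $D$ via the union boundary, and $O(n)$ I/O events per pair by the low-degree algebraic motion of $C_2$), followed by sorting the events and sweeping a coverage counter for a total of $O(n^3\log n)$. No substantive differences from the paper's argument.
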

For the optimization algorithm we use parametric search.

To use the parametric search technique, we will design a parallel version
of the decision algorithm. Then the overall algorithm runs in time
$O(p\cdot T_p+T_p\cdot T_d\log p)$, where $p$ denotes the number of
processors, $T_d$ denotes the running time of a decision algorithm,
and $T_p$ denotes the running time of the parallel decision
algorithm using $p$ processors. We have a parallel decision algorithm where
$p = O(n^3)$ processors and $T_p = O(\log^2 n)$ time for some constant $c> 1$. Thus the overall algorithm runs in time $O(n^3\log^{4}n)$ time.

\paragraph{Parallel decision algorithm.}
For case (a), we assign $O(n)$ processors to each candidate $D$.  The 1-center disk covering for the disks in $\mathcal{D}\setminus D$ can be computed by a known parallel linear programming algorithm~\cite{goodrich96} in $O(\log^2 n)$ time with $O(n)$ processors.
Hence, we can solve case (a) in $O(\log^2 n)$ time with $O(n^2)$ processor.

For case (b), we assign $O(n)$ processors to each pair $(D, D')$ of input disks. With a covering disk $C_1$ of radius $r$ determined by $D$ and $D'$, we cover the input disks and compute the crescents of the input disks not covered by $C_1$ in a constant time. The 1-center disk covering for the crescents can be computed in $O(\log^2 n)$ time with $O(n)$ processors~\cite{goodrich96}. Thus we can handle the case (b) in $O(\log^2 n)$ time with $O(n^3)$ processors, and moreover we can deal with case (c) in a similar way.

For case (d), we first compute the union $U$ of all input disks in $O(\log n)$ time with $O(n^3)$ processors~\cite{papa87}. Next we assign $O(n^2)$ processors to each input disk $D$. We fix $D$. As $C_1$ rotates around $D$ while they keep touching as in Figure~\ref{fig:GeneralCases}(c), we need to figure out $O(n)$ pairs $(i, j)$ of input disks such that $D_i$ and $D_j$ are the first and the last ones intersected by $C_1$, respectively. Such $D_i$ and $D_j$ must be on the boundary of $U$, so it is sufficient to consider the disks whose arcs appear on the boundary of $U$. To get these pairs, we assign $O(n)$ processors to each disk on the union boundary in order to calculate two rotating angles of $C_1$ at which $C_1$ hits the disk at the first and the last in $O(1)$ time. We collect all these angles, sort them, and extract the pairs $(i, j)$ from the sorted list; all steps are easily done in $O(\log n)$ time using $O(n)$ processors.

For a fixed angle interval $I$ determined by some pair $(i, j)$, the set of input disks not covered by $C_1$ remains same, and we can also know which disks are those ones. Using $O(n)$ processors in $O(1)$ time for each input disk $D'$ not covered by $C_1$, we compute the subintervals $J \subseteq I$ such that $C_1\cup C_2$ determined by $J$ contains $D'$ at any angle in $J$. These subintervals are defined by I/O events we mentioned in the sequential decision algorithm, so there are $O(n)$ subintervals. Finally we test whether the intersection of the subintervals is empty or not. If it is not empty, then it means there is a rotation angle in $I$ at which all input disks not covered by $C_1$ get to be contained in $C_1\cup C_2$. Otherwise, no angles in $I$ guarantee the full coverage by $C_1\cup C_2$. This test can be done in bottom-up fashion in $O(\log n)$ time using $O(n)$ processors. After testing all pairs $(i, j)$, if there is a pair such that the intersection is not empty, then return ``yes''. This is done in $O(\log n)$ time with $O(n^2)$ processors for a fixed disk $d$ touching $C_1$ from inside. Summing up all things, we can solve case (d) in $O(\log n)$ time with $O(n^3)$ processors.

\begin{theorem}
Given a set of $n$ disks in the plane, we can find a pair of congruent disks
with smallest radius whose union covers all of them in $O(n^3\log^4 n)$ time.
\end{theorem}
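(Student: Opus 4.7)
The plan is to apply Megiddo's parametric search technique, treating the optimal radius $r^*$ as the parameter and using the decision algorithm of the preceding lemma as the oracle. Since the decision procedure runs in $T_d = O(n^3 \log n)$ time sequentially, and we have just described a parallel version achieving $T_p = O(\log^2 n)$ with $p = O(n^3)$ processors, the standard parametric search bound $O(p \cdot T_p + T_p \cdot T_d \log p)$ directly yields
\[
O\bigl(n^3 \log^2 n\bigr) + O\bigl(\log^2 n \cdot n^3 \log n \cdot \log n^3\bigr) = O(n^3 \log^4 n).
\]

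To carry this out, I would first verify that the parallel decision algorithm is suitably ``generic'' in the sense required by parametric search, i.e., that its control flow is governed by comparisons whose outcomes are determined by the sign of low-degree polynomials in $r$. The critical comparisons arise from the geometric events used in Lemma~\ref{lem:general.cases}: testing whether a candidate covering disk $C_1$ of radius $r$ contains a given input disk, whether two input disks determine a valid $C_1$ of radius $r$ touching both from inside, whether an input disk is aligned with a common boundary point at distance $2r$, and the I/O events along a rotation. Each such comparison amounts to checking the sign of an algebraic expression in $r$ of bounded degree, which can be resolved at the true parameter value by running the decision algorithm on the breakpoints.

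The parametric search then proceeds in the usual way: simulate the parallel algorithm symbolically with $r^*$ as an unknown; at each parallel step, collect the $O(p) = O(n^3)$ critical values of $r$ that control the comparisons, and resolve them via $O(\log p) = O(\log n)$ sequential calls to the decision algorithm after sorting (in $O(\log p)$ time using $p$ processors). Across the $T_p = O(\log^2 n)$ parallel steps, this gives the claimed running time. At termination, the simulation has narrowed $r^*$ to an interval containing at most one critical value, which is then identified as $r^*$; the corresponding pair of disks $(C_1, C_2)$ is recovered from the witnessing configuration in the decision algorithm.

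The main obstacle I anticipate is not the arithmetic of the parametric search bound, which is routine, but ensuring that the parallel decision algorithm decomposes cleanly into a logarithmic-depth sequence of bounded-degree comparisons, particularly in case (d) where the interval analysis over pairs $(i,j)$ and the I/O event sorting occur. Since the union-of-disks computation~\cite{papa87}, the parallel LP-type covering~\cite{goodrich96}, and the sorting of $O(n)$ events are each known to be implementable in Valiant's comparison model on an EREW/CRCW PRAM with the stated processor-time bounds, and since all geometric predicates involved are algebraic of constant degree in $r$, the conditions for Cole-type parametric search are met, and the bound $O(n^3 \log^4 n)$ follows.
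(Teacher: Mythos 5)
Your proposal follows essentially the same route as the paper: parametric search on the radius $r$, using the $O(n^3\log n)$ sequential decision algorithm as the oracle together with a parallel decision algorithm running in $T_p = O(\log^2 n)$ with $p = O(n^3)$ processors (built from the parallel union-of-disks and parallel LP-type covering primitives you cite, handling cases (a)--(d) of Lemma~\ref{lem:general.cases}), and the standard bound $O(p\cdot T_p + T_p\cdot T_d\log p) = O(n^3\log^4 n)$. This matches the paper's argument, including your (correct) observation that the only delicate point is verifying that the parallel algorithm's comparisons, especially the I/O-event handling in case (d), reduce to constant-degree sign tests in $r$.
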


\subsubsection{Constant Factor Approximation.}\label{ssec:generalCase}

We apply the well known greedy $k$-center approximation algorithm by
Gonzalez~\cite{Gonzalez85} to our general covering case.
It works as follows: First pick an arbitrary point $c_1$ in the union $\bigcup \mathcal D$
of our input disks. For instance, we could choose $c_1$ to be the center of $D_1$. Then
compute a point $c_2 \in \bigcup \mathcal D$ that is farthest from $c_1$.
This can be done in linear time by brute force. These two points are the centers of our
two covering disks, and we choose their radius to be as small as possible, that is,
the radius of the two covering disks is the maximum distance from any point in
$\bigcup \mathcal D$ to its closest point in $\{c_1,c_2\}$.
This algorithm is a  2-approximation algorithm, so we obtain the following result:
\begin{theorem}\label{th:generalconstant}
We can compute in $O(n)$ time a $2$-approximation for the
general covering problem for a set $\mathcal{D}$ of $n$ disks.
\end{theorem}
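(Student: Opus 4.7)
The plan is to verify the two claims of the theorem: (i) the pair of disks returned has radius at most $2r^*$, where $r^*$ is the optimum for the general covering problem; and (ii) the whole procedure runs in $O(n)$ time. The algorithm itself is already spelled out in the paragraph preceding the theorem, so only the analysis remains.

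For the approximation factor, I would adapt the classical Gonzalez pigeonhole argument to our continuous setting. Let $C_1^*, C_2^*$ be an optimal pair of covering disks of radius $r^*$, so $\bigcup \mathcal{D} \subseteq C_1^* \cup C_2^*$, and let $r = \max_{p \in \bigcup \mathcal{D}} \min(\|p-c_1\|,\|p-c_2\|)$ be the radius returned by the algorithm. Assume for contradiction that $r > 2r^*$, and pick $p^* \in \bigcup \mathcal{D}$ realizing the maximum. Since $c_2$ is chosen to be the point of $\bigcup \mathcal{D}$ farthest from $c_1$, we have $\|c_1-c_2\| \geq \|c_1-p^*\| > 2r^*$, and by assumption $\|c_2-p^*\|>2r^*$ as well. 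Thus the three points $c_1, c_2, p^*$ all lie in $\bigcup \mathcal{D}\subseteq C_1^*\cup C_2^*$ and are at pairwise distance strictly greater than $2r^*$. But any two points of the same optimal disk are at distance at most $2r^*$, so the pigeonhole principle forces a contradiction.

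For the running time, choosing $c_1$ (say, $c_1 := c(D_1)$) is immediate. The point of $\bigcup \mathcal{D}$ farthest from $c_1$ lies on some $\partial D_i$ and achieves distance $\|c(D_i)-c_1\|+r(D_i)$, so $c_2$ is found by an $O(n)$ scan over the input disks. Finally, the radius $r$ is the maximum over $i$ of $\max_{p\in D_i}\min(\|p-c_1\|,\|p-c_2\|)$; for each single disk $D_i$ this inner maximum is attained either where the perpendicular bisector of $c_1c_2$ crosses $\partial D_i$, or at the point of $\partial D_i$ extreme in the direction away from the nearer center, which is an $O(1)$ case analysis. Summing over the $n$ disks gives the claimed $O(n)$ bound. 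I do not expect any real obstacle here: the only adaptation of Gonzalez's analysis is the observation that the three ``landmark'' points $c_1, c_2, p^*$ all belong to $\bigcup \mathcal{D}$ and hence are themselves covered by the optimal pair, which is exactly what makes the pigeonhole step go through.
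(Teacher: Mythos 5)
Your proposal is correct and follows essentially the same route as the paper: it is exactly the Gonzalez greedy two-center algorithm ($c_1$ arbitrary in $\bigcup\mathcal{D}$, $c_2$ the farthest point, radius as small as possible) with the standard pigeonhole analysis that the paper invokes by citation, plus the $O(1)$-per-disk computations for the farthest point and the final radius that the paper summarizes as ``linear time by brute force.''
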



\subsubsection{$(1+\epsilon)$-Approximation.}

Our $(1+\epsilon)$-approximation algorithm is an adaptation of an algorithm
by Agarwal and Procopiuc~\cite{Agarwal02}.
We start by computing a $2$-approximation for the general covering case in
$O(n)$ time using our algorithm from Theorem~\ref{th:generalconstant}.
Let $C_1$, $C_2$ be the disks computed by this approximation algorithm and let $r$ be their
radius.
We consider a grid of size $\delta=\lambda \epsilon r$ over the plane, where
$\lambda$ is a small enough constant. That is, we consider the  points with coordinates
$(i\delta,j\delta)$ for some integers $i,j$. Observe that there are only $O(1/\epsilon^2)$
grid points in $C_1 \cup C_2$.
The center of each disk $D$ is moved to a nearby grid point. That is, a center $(x,y)$
is replaced by  $(\delta \lceil x/\delta \rceil, \delta \lceil y/\delta \rceil)$. If two
or more centers are moved to the the same grid point, we only keep the disk with the largest radius.
All the centers are now grid points inside $C_1 \cup C_2$, or at distance at most $\sqrt 2 \delta$
from the boundary of this union, so we are left with a set of $O(1/\epsilon^2)$ disks.
We now replace this new set of disks by grid points: each disk is replaced by the grid points
 which are
closest to the boundary of this disk and lie inside this disk.
In order to compute these points we consider each column of the grid separately: The intersection
of each disk with this column is an interval, and we replace the interval by the lowest  and
the highest grid point lying inside this interval.  Since the set of disks has size $O(1/\epsilon^2)$
and the number of columns is $O(1/\epsilon)$, it takes in total $O(1/\epsilon^3)$ time.
The set of grid points we obtain is denoted by $P_g$ and its size is $O(1/\epsilon^2)$.
We compute two smallest disks $E_1,E_2$ that cover $P_g$ in
$O(\frac{1}{\epsilon^2}\log^2 \frac{1}{\epsilon}\log^2 \log \frac{1}{\epsilon})$ time
using the algorithm from Chan~\cite{chan99}.
Choosing the constant $\lambda$ small enough and increasing the radii of $E_1,E_2$ by $2\sqrt{2}\delta$, these disks
are a $(1+\epsilon)$-approximation
of the solution to our general disk cover problem.

\begin{theorem}\label{th:generalLTAS}
 Given a set $\mathcal{D}$ of $n$ disks in the plane, a
  $(1+\epsilon)$-approximation for $\mathcal{D}$ in the general covering case can be computed in $O(n+1/ \epsilon^3)$
  time.
\end{theorem}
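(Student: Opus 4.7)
My plan is to verify the correctness and running time of the algorithm described in the paragraphs preceding the theorem, which is essentially the grid-rounding approximation scheme of Agarwal and Procopiuc applied to disks. The proof splits into establishing (i) that the radii used in constructing the grid are a valid multiplicative approximation of $r^*$, (ii) that the output disks $E_1,E_2$ enlarged by $2\sqrt{2}\delta$ cover all of $\mathcal{D}$, (iii) that these enlarged disks have radius at most $(1+\epsilon)r^*$, and finally (iv) that the work adds up to $O(n+1/\epsilon^3)$.

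The first part is immediate: by Theorem~\ref{th:generalconstant} the initial radius $r$ satisfies $r^* \le r \le 2r^*$, so $\delta = \lambda\epsilon r \le 2\lambda\epsilon r^*$. For (ii), I would argue that after snapping each original center $c(D)$ to its grid point $c'(D)$ with the same radius $r(D)$, the snapped disk lies within the $\sqrt{2}\delta$-neighborhood of $D$. The points of $P_g$ chosen in each column are the topmost and bottommost grid points inside each snapped disk, so any convex region of diameter $\ge \delta$ containing $P_g$ for a snapped disk must also contain that snapped disk up to an additive $\sqrt{2}\delta$ along its boundary; expanding by a further $\sqrt{2}\delta$ restores the original disk. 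For (iii), I would show conversely that the optimal covering disks $C_1^*, C_2^*$, which contain every disk of $\mathcal{D}$ and hence every snapped disk up to $\sqrt{2}\delta$ slack, also contain every point of $P_g$ after shrinking by $\sqrt{2}\delta$; thus the exact two-center radius of $P_g$ is at most $r^* + \sqrt{2}\delta$, and adding $2\sqrt{2}\delta$ gives a final radius of at most $r^* + 3\sqrt{2}\delta \le r^* + 6\sqrt{2}\lambda\epsilon r^*$, which is $(1+\epsilon)r^*$ for $\lambda \le 1/(6\sqrt{2})$.

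For the running time, the $2$-approximation step is $O(n)$. Snapping the $n$ centers and keeping only the largest-radius representative per grid cell uses a hash-based bucketing over the $O(1/\epsilon^2)$ active cells, which is again $O(n)$. Constructing $P_g$ by processing each of the $O(1/\epsilon)$ grid columns against each of the $O(1/\epsilon^2)$ surviving disks costs $O(1/\epsilon^3)$. Finally, Chan's exact 2-center algorithm on $|P_g|=O(1/\epsilon^2)$ points takes $O((1/\epsilon^2)\log^2(1/\epsilon)\log^2\log(1/\epsilon))$ time, which is $O(1/\epsilon^3)$ for sufficiently small $\epsilon$. Summing all contributions yields $O(n+1/\epsilon^3)$.

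The main obstacle, I expect, is the geometric bookkeeping in step (ii): one must show that the sparse point set $P_g$—consisting of only two grid points per column per surviving snapped disk—still controls covering of the full original disks after enlargement by $2\sqrt{2}\delta$. The tricky case is a long, thin disk that intersects many columns but whose $P_g$ contribution does not tightly follow its boundary; I would handle this by observing that consecutive columns are only $\delta$ apart, so the chord of a disk between adjacent extremal grid points lies within $\delta$ of the boundary arc, and hence the enlarged cover region contains the entire arc. Choosing $\lambda$ small enough to absorb all constant factors in the $\sqrt{2}\delta$ slacks completes the argument.
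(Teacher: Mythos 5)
Your algorithm, the radius accounting (optimal two-center radius of $P_g$ is at most $r^*+\sqrt{2}\delta$, hence at most $(1+\epsilon)r^*$ after the $2\sqrt{2}\delta$ enlargement, for $\lambda$ small enough), and the $O(n+1/\epsilon^3)$ time analysis all coincide with the paper's construction, which itself only asserts correctness. However, your coverage argument (step (ii)) has a genuine gap, and it sits exactly at the point the paper glosses over. First, in the \emph{general} covering case the points of $P_g$ generated by a single snapped disk need not all lie in the same output disk: they can be split between $E_1$ and $E_2$ (that is the whole point of the general case). So your key claim, phrased as ``any convex region containing $P_g$ for a snapped disk must contain that snapped disk up to additive slack,'' is quantified over the wrong object; neither $E_1$ nor $E_2$ alone need contain those samples, and the convex-hull-of-samples argument inside one region does not apply. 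Second, even granting what your column-by-column reasoning does give -- that the enlarged union $E_1(2\sqrt{2}\delta)\cup E_2(2\sqrt{2}\delta)$ contains the \emph{boundary circle} of every snapped disk (your adjacent-column remark correctly handles the columns near the left/right extremes whose chords are shorter than $\delta$; note there are no ``long, thin'' disks) -- the problem demands containment of the whole disk, and interior points such as the center of a large input disk are far from every point of $P_g$, so boundary coverage is all the samples can certify directly.

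The missing step is the lemma: if the union of \emph{two} convex sets contains a circle, it contains the closed disk bounded by it. (Proof sketch: if an interior point $z$ were missed, separate $z$ from $E_1(2\sqrt{2}\delta)$ and from $E_2(2\sqrt{2}\delta)$ by two halfplanes; the complementary region containing $z$ is an intersection of two open halfplanes, hence unbounded, so a ray from $z$ inside it crosses the circle at an uncovered point -- contradiction.) This is not automatic: three convex sets can cover a circle without covering its center (three circular segments cut off by the sides of a triangle around the center), so the fact that only two covering disks are in play must be invoked explicitly. With this lemma inserted after your boundary-coverage argument, and the $\sqrt{2}\delta$ transfer back from snapped to original disks (including disks discarded when several centers snap to the same grid point, which are contained in the kept disk inflated by $\sqrt{2}\delta$), your proof is complete and otherwise follows the paper's route.
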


\subsection{The Restricted Case}

Observation~\ref{obs:bisector} can be adapted to the restricted covering case.
\begin{observ}\label{obs:bisectorCovering}
Let $\ell$ be the bisector of an optimal solution $C_1$ and
$C_2$. Then, $D\subset C_i$ for every $D\in \mathcal{D}$
whose center lies in the same side of $\ell$ as the center of $C_i$,
for $i=\{1,2\}$.
\end{observ}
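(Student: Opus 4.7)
The plan is to reduce the claim to the standard metric characterization of disk containment: $D \subset C$ if and only if $\|c(D)-c(C)\| + r(D) \leq r(C)$. Combined with the fact that $r(C_1) = r(C_2) = r^*$ (since the optimal covering disks are congruent) and the defining property of the perpendicular bisector $\ell$, this will quickly yield the observation.

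First I would fix an arbitrary $D \in \mathcal{D}$ whose center lies on the side of $\ell$ containing $c(C_1)$ (the other side is handled by symmetry, and a center lying exactly on $\ell$ may be treated as belonging to either side). Since $(C_1, C_2)$ is a feasible solution to the restricted covering, $D$ must be fully contained in $C_1$ or in $C_2$. If $D \subset C_1$, there is nothing to prove. In the remaining case $D \subset C_2$, which by the characterization means $\|c(D) - c(C_2)\| + r(D) \leq r(C_2) = r^*$.

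Because $c(D)$ lies on the same side of $\ell$ as $c(C_1)$, and $\ell$ is the perpendicular bisector of the segment $c(C_1)c(C_2)$, we have $\|c(D) - c(C_1)\| \leq \|c(D) - c(C_2)\|$. Substituting this into the previous inequality and using $r(C_1) = r^*$ yields $\|c(D) - c(C_1)\| + r(D) \leq r^* = r(C_1)$, so $D \subset C_1$, as claimed.

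The only point that merits a second look is the degenerate case where $c(D)$ lies on $\ell$ itself: there both distance inequalities become equalities, and $D$ is contained in $C_1$ and $C_2$ simultaneously, which is consistent with the statement regardless of the chosen side. So there is no real obstacle here; the argument is essentially a one-line consequence of congruence of the optimal disks plus the triangle-type inequality encoded in disk containment.
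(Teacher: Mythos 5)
Your proof is correct: the containment criterion $\|c(D)-c(C)\|+r(D)\le r(C)$, the congruence $r(C_1)=r(C_2)$, and the bisector property of $\ell$ give exactly the claimed conclusion, and the boundary case $c(D)\in\ell$ is handled consistently. The paper states this as an observation without proof (adapting Observation~\ref{obs:bisector}), and your argument is precisely the intended, standard justification.
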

Hence, the restricted covering problem can be solved in $O(n^3)$ time, since for a set  of $n$ disks $\mathcal{D}$  the smallest disk covering all $D\in\mathcal{D}$ can be computed in $O(n)$ time \cite{megiddo89} and there are $O(n^2)$ different bipartitions of the centers of the disks.

The algorithm from Section~\ref{sec:Intersecting} can also be adapted to solve the restricted covering problem.
We consider the decision problem, which can be formulated as follows:
Given a set of $n$ disks $\mathcal{D}$ and a value $\delta$, we want to decide whether there exists two disks $C_1, C_2$ with radius $\delta$, such that each disk $D_i\in \mathcal{D}$ is covered by either $C_1$ or $C_2$.
This implies that for each disk $D_j\in \mathcal{D}$ covered by $C_i$, the following holds: $d(c(D_j),c(C_i))+r(D_j)\leq \delta$, for $i=\{1,2\}$.
Let $r_{\text{max}}$ be the maximum of radii of all disks in $\mathcal{D}$. It holds that $\delta\geq r_{\text{max}}$, since if $\delta<r_{\text{max}}$ there clearly exists no two disks with radius $\delta$ which cover $\mathcal{D}$.
We can formulated the problem in a different way.

\begin{quote}
Given a value
$\delta$, do there exist two points, $p_1$ and $p_2$,
in the plane such that $D^*({\delta})\cap \{p_1,p_2\}\neq\emptyset$
for every $D\in \mathcal{D}$, where $D^*({\delta})$ is a disk concentric to $D$ and
whose radius is $\delta-r(D)\geq 0$.
\end{quote}
Recall the definition of $\delta$-\emph{inflated} disks from Section~\ref{sec:Intersecting}. Every disk $D\in \mathcal{D}$ was replaced by a disk concentric to $D$ and whose radius was $r(D)+\delta$.
Here we actually need to replace each disk $D$ by a disk that is concentric to $D$ and has a radius $\delta-r(D)$. Since we know that $\delta\geq r_{\text{max}}$, we add an initialization step, in which every disk $D$ is replaced by a disk concentric to $D$ and
whose radius is $r_{\text{max}}-r(D)$. Then we can use exactly the same algorithm in Section~\ref{sec:Intersecting} in order to compute a solution for the restricted covering problem. Let $\delta^*$ be the solution value computed by this algorithm. Clearly the solution for the covering problem is then $\delta^*+r_\text{max}$.
We summarize this result in the following theorem.
\begin{theorem}
  Given a set of $n$ disks $\mathcal{D}$ in the plane, we can compute
  two smallest congruent disks such that each disk in $\mathcal{D}$ is
  covered by one of the disks in $O(n^{2}\log^3 n)$ expected time or in $O(n^2\log^4 n \log\log n)$ worst-case time.
\end{theorem}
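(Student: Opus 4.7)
The plan is to reduce the restricted covering problem to the intersection problem solved in Section~\ref{sec:Intersecting}, via the reformulation already sketched just before the theorem statement. First I would observe that a disk $D$ is fully covered by a disk $C$ of radius $\delta$ centered at $p$ if and only if $d(c(D),p)+r(D)\leq\delta$, i.e.\ $p$ lies in the disk $D^*(\delta)$ concentric with $D$ and of radius $\delta-r(D)$. Since covering is impossible when $\delta<r_{\max}:=\max_{D\in\mathcal{D}}r(D)$, we may restrict attention to $\delta\geq r_{\max}$, which guarantees $\delta-r(D)\geq 0$ for every $D$, so every $D^*(\delta)$ is well-defined.

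Next I would recast the decision problem ``do there exist two radius-$\delta$ covering disks?'' as: do there exist two points $p_1,p_2$ in the plane such that $D^*(\delta)\cap\{p_1,p_2\}\neq\emptyset$ for every $D\in\mathcal{D}$? This is exactly the two-piercing decision problem from Section~\ref{sec:decision}, but with inflated disks of radius $\delta-r(D)$ instead of $r(D)+\delta$. To align notation with Section~\ref{sec:Intersecting}, I would perform a one-time preprocessing step that replaces every $D\in\mathcal{D}$ by the disk $\widetilde{D}$ concentric with $D$ and of radius $r_{\max}-r(D)\geq 0$. Writing $\delta=r_{\max}+\delta'$ for $\delta'\geq 0$, the radius of $D^*(\delta)$ becomes $\delta-r(D)=\delta'+(r_{\max}-r(D))=r(\widetilde{D})+\delta'$, so $D^*(\delta)=\widetilde{D}(\delta')$ in the notation of Section~\ref{sec:Intersecting}.

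Therefore, running the algorithm of Section~\ref{sec:Intersecting} on the preprocessed instance $\{\widetilde{D}_1,\dots,\widetilde{D}_n\}$ produces the minimum value $\delta'^*$ for which the decision problem answers ``yes''. By the correspondence above, $\delta^*:=\delta'^*+r_{\max}$ is the minimum radius of two congruent disks that cover $\mathcal{D}$ in the restricted sense, and the two piercing points returned by the intersection algorithm are the centers of the two optimal covering disks. The running time is inherited verbatim from the theorem for the intersection problem: $O(n^2\log^3 n)$ expected and $O(n^2\log^4 n\log\log n)$ deterministic, plus $O(n)$ for the preprocessing.

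The main thing to verify carefully is the equivalence $D^*(\delta)=\widetilde{D}(\delta-r_{\max})$ for $\delta\geq r_{\max}$ and the observation that the optimum truly satisfies $\delta^*\geq r_{\max}$ (otherwise no covering exists and the problem is infeasible, which can be detected trivially); beyond these checks the analysis of Section~\ref{sec:Intersecting} transfers without modification since the inputs to the decision subroutine are still a family of $n$ disks in the plane and nothing in that algorithm uses a relationship between $r(\widetilde{D})$ and $r(D)$.
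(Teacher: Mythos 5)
Your proposal is correct and follows essentially the same route as the paper: reformulate covering by a radius-$\delta$ disk as piercing the concentric disk of radius $\delta-r(D)$, observe $\delta\geq r_{\max}$, shrink each input disk to radius $r_{\max}-r(D)$ as a preprocessing step, run the intersection algorithm of Section~\ref{sec:Intersecting}, and add $r_{\max}$ back to the computed optimum. The verification of $D^*(\delta)=\widetilde{D}(\delta-r_{\max})$ that you flag is exactly the substance of the paper's argument, so nothing is missing.
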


\subsubsection{Constant Factor Approximation.}
Let $C_1,C_2$ denote an optimal solution to the general case, and let $r_g$ be their radius. Then any
solution to the restricted case is also a solution to the general case, so we have $r_g$ is at most
the radius of the optimal solution to the restricted case. On the other hand, the inflated disks
$C_1(2r_g),C_2(2r_g)$ form a solution to the restricted case, because any disk contained
in $C_1 \cup C_2$ should be contained in either $C_1(2r_g)$ or $C_2(2r_g)$.
So we obtain a 6-approximation algorithm for the restricted case by first applying our 2-approximation
algorithm for the general case (Theorem~\ref{th:generalconstant}) and then multiplying by 3 the radius
of the two output disks:
\begin{theorem}\label{th:restrictedconstant}
  Given a set of $n$ disks $\mathcal{D}$ in the plane, we can compute in
  $O(n)$ time a $6$-approximation to the restricted covering problem.
\end{theorem}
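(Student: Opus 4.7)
The plan is to leverage the linear-time general-case 2-approximation from Theorem~\ref{th:generalconstant}. I would run that algorithm on $\mathcal{D}$ to obtain, in $O(n)$ time, two congruent disks $C_1,C_2$ of common radius $r$ whose union covers $\bigcup\mathcal{D}$; by the $2$-approximation guarantee, $r\le 2r_g$, where $r_g$ is the optimal general-cover radius. I would then return the two concentric disks $C_1(2r)$ and $C_2(2r)$ of radius $3r$. Since every restricted cover is also a general cover, $r_g\le r^*$ where $r^*$ is the optimal restricted-cover radius; hence the output radius satisfies $3r\le 6r_g\le 6r^*$, and the total running time remains $O(n)$.

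The heart of the proof is verifying that the pair $(C_1(2r),C_2(2r))$ forms a valid restricted cover, i.e.\ that every input disk $D\in\mathcal{D}$ is entirely contained in one of them. First note $D\subseteq\bigcup\mathcal{D}\subseteq C_1\cup C_2$, so the center $c(D)$ itself lies in one of $C_1,C_2$; assume without loss of generality $c(D)\in C_1$, which gives $|c(D)-c(C_1)|\le r$. By the triangle inequality, any $q\in D$ satisfies $|q-c(C_1)|\le |q-c(D)|+|c(D)-c(C_1)|\le r(D)+r$, so it suffices to show $r(D)\le 2r$ to conclude $|q-c(C_1)|\le 3r$, i.e.\ $q\in C_1(2r)$.

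To bound $r(D)$ I would split on whether $C_1$ and $C_2$ overlap. If they are disjoint, the connected set $D\subseteq C_1\cup C_2$ must lie entirely inside a single $C_i$, giving $r(D)\le r$. Otherwise $C_1$ and $C_2$ share a point, so $|c(C_1)-c(C_2)|\le 2r$, and the diameter of $C_1\cup C_2$ is bounded by $|c(C_1)-c(C_2)|+2r\le 4r$; since $D\subseteq C_1\cup C_2$, any antipodal pair of points of $D$ therefore satisfies $2r(D)\le 4r$. In both cases $r(D)\le 2r$, as needed. This elementary case split is the only non-trivial ingredient; the approximation ratio and the $O(n)$ runtime follow immediately from the guarantees of Theorem~\ref{th:generalconstant} and the trivial inequality $r_g\le r^*$.
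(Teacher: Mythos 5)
Your proposal is correct and follows essentially the same route as the paper: run the linear-time general-case $2$-approximation, triple the radius (i.e.\ output $C_1(2r),C_2(2r)$), and combine the bound $r\le 2r_g$ with $r_g\le r^*$ to get the factor $6$. The only difference is that you explicitly verify the containment claim (via $c(D)\in C_i$, the triangle inequality, and the case split bounding $r(D)\le 2r$), which the paper merely asserts in one sentence; this is a welcome but not divergent elaboration.
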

As in the general case, we will see below how to improve it to a linear time algorithm for any
constant approximation factor larger than 1.

\subsubsection{$(1+\epsilon)$-Approximation.}

Recall Observation~\ref{obs:bisectorCovering}. Let $\ell$ be the bisector of an optimal solution. Then each disk $D\in\mathcal{D}$ is covered by the disk $C_i$ whose center lies in the same side of the center of $C_i$, $i\in\{1,2\}$.
Hence, if we know the bisector, we know the bipartition of the disks. First, we show how to compute an optimal solution in $O(n\log n)$ time if the direction of the bisector is known. Later on we explain how this algorithm is used in order to obtain a
 $(1+\epsilon)$ approximation.

\paragraph{Fixed Orientation.}
W.l.o.g, assume that the bisector is vertical.
After sorting the centers of all $D \in \mathcal{D}$
by their $x$-values,
we sweep a vertical line $\ell$ from left to right, and
maintain two sets $\mathcal{D}_1$ and $\mathcal{D}_2$:
$\mathcal{D}_1$ contains all disks whose centers lie to the left of
$\ell$ and $\mathcal{D}_2=\mathcal{D}\setminus\mathcal{D}_1$.
Let $C_1$ be the smallest disk covering $\mathcal{D}_1$ and $C_2$ the smallest
disk covering $\mathcal{D}_2$. While sweeping $\ell$ from left to right,
the radius of $C_1$ is nondecreasing and the radius of $C_2$ nonincreasing
and we want to compute $\min \max(r(C_1),r(C_2))$. Hence, we can perform a binary search on the list of the centers of the disks in $\mathcal{D}$. Each step takes $O(n)$ time, thus we achieve a total running time of $O(n \log n)$.

\paragraph{Sampling.}
We use $2\pi / \epsilon$ sample orientations chosen regularly over
$2\pi$, and compute for each orientation the solution in $O(n\log n)$
time.  The approximation factor can be proven by showing that there is
a sample orientation that makes angle at most $\epsilon$ with the
optimal bisector. Without loss of generality,
we assume that the bisector, denoted by $b$, of an optimal solution
$C_1^*$ and $C_2^*$ is vertical as in Figure~\ref{fig:EpsilonApprox}. 
Let $q$ denote the midpoint of the segment connecting 
$c(q_1)$ and $c(q_2)$.

Let $\ell$ be the line
which passes through $q$ and
makes angle with $b$ is at most $\epsilon$ in counterclockwise
direction as in the figure.  (For simplification we set the angle in
the calculation to exactly $\epsilon$.)
\begin{figure}[ht!]
  \centering
  \includegraphics[scale=0.6, page=2]{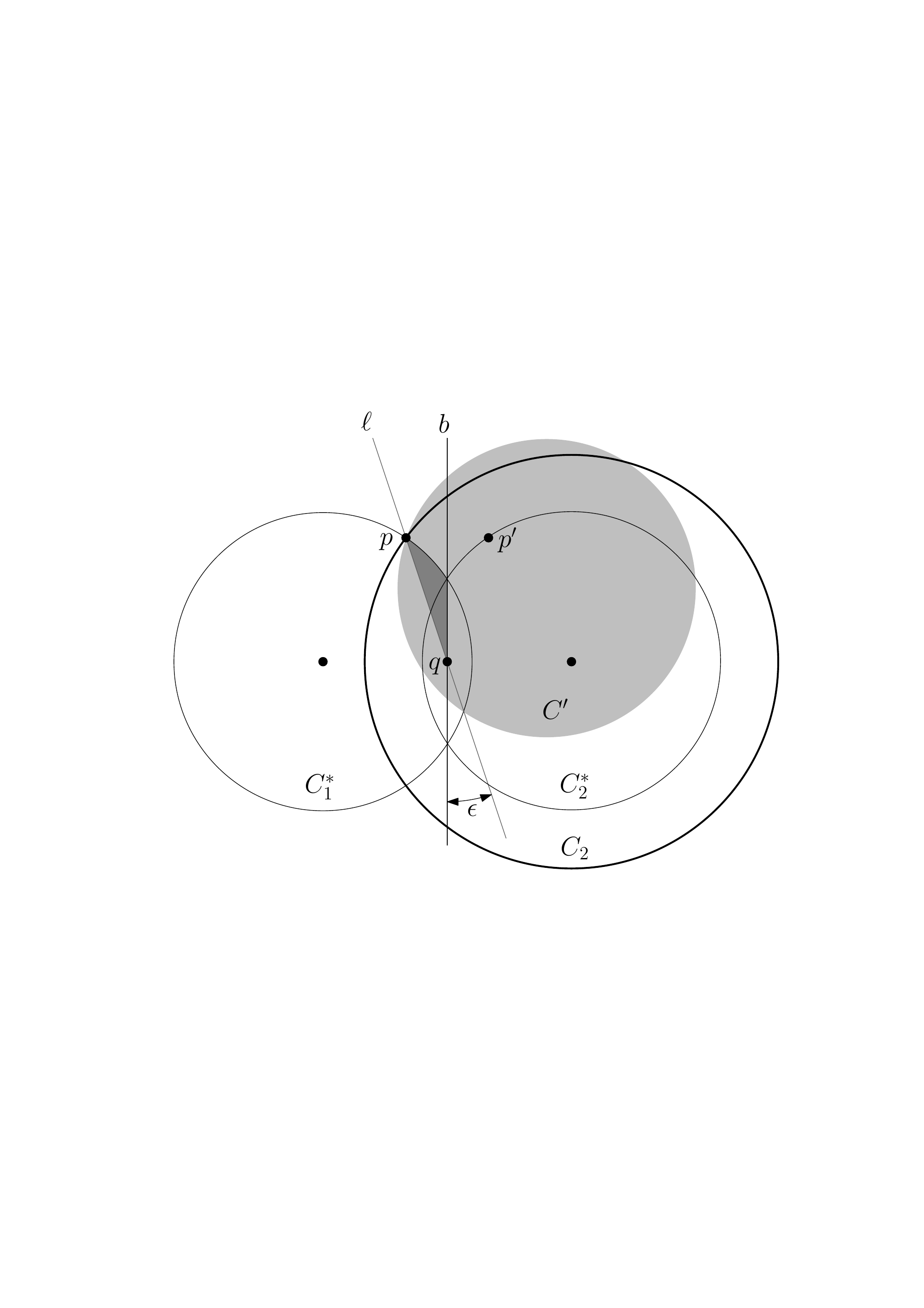}
  \caption{$r(C_2)\leq (1+\epsilon')r(C^*_2)$ for any $\epsilon'\geq 4\epsilon$.}\label{fig:EpsilonApprox}
\end{figure}
Let $p$ denote the intersection point of $\ell$ with the upper circular
arc of $\partial C_1^*$, and let $p'$ denote the point symmetric to $p$ 
along $b$. Clearly $p'$ lies on the boundary of $C_2^*$. 
We will
show that there exists two disks $C_1,C_2$ where $C_1$ covers all
disks whose centers lie to the left of $\ell$ and $C_2$ covers all
disks whose centers lie to the right of $\ell$ and $r(C_1)=r(C_2)\leq
(1+\epsilon) r(C_1^*)$.

We will explain the construction of $C_2$ and prove that $C_2$ covers 
all disks whose centers lie to the right of $\ell$. $C_1$ can be 
constructed analogously. 
The center of $C_2$ is set to $c(C_2^*)$ and the radius is set to
$|c(C^*_2)p|\leq |\overline{c(C_2^*)p'}|+|\overline{p'p}|$. It holds that
$|\overline{c(C_2^*)p'}|+|\overline{p'p}|\leq
r(C_2^*)+4r(C_2^*)\sin\epsilon$, since $|\overline{p'q}|\leq 2r(C_2^*)$ and
the distance of $p'$ to $b$ is at most $2r(C_2^*)\sin\epsilon$.
Clearly $C_2$ covers all disks that were covered by
$C_2^*$. In addition, it must cover all disks whose centers lie in
the region of $C^*_1$ that is bounded by $\ell$ and $b$ and that has $q$ as
its lowest point, depicted as the dark gray region in
Figure~\ref{fig:EpsilonApprox}. Note that the disks whose centers lie
in this region are fully covered by $C^*_1$, but not necessarily by $C^*_2$.
 
It remains to prove that all disks
having their center in the dark gray region are fully covered by $C_2$.
Let $C'$ be the disk symmetric to $C^*_1$ along $\ell$.
Then all disks whose centers lie in the dark gray region are covered by 
$C^*_1\cap C'$, because this region is symmetric along $\ell$ and 
they are fully covered by $C^*_1$. 
Since $C_2$ contains the intersection $C^*_1\cap C'$, we conclude that 
all disks whose centers lie
on the right side of $\ell$ are covered by $C_2$.

We can prove the analog for $C_1$.
Hence, $$r(C_1)=r(C_2)\leq (1+4\sin\epsilon) r(C_1^*)\leq (1+\epsilon')r(C_1^*)=(1+\epsilon')r(C_2^*)$$
as $\sin\epsilon\leq \epsilon$ for $\epsilon\leq 1$ (can be shown by using the theory of Taylor series) and for any $\epsilon'\geq 4\epsilon$.
Since any solution whose bisector is parallel to $\ell$ has a radius 
at most $r(C_1)$, this solution has radius at most $(1+\epsilon')$ times 
the optimal radius.

\begin{theorem}\label{th:generalFPTAS}
 For a given a set $\mathcal{D}$ of $n$ disks in the plane, a
  $(1+\epsilon)$ approximation for the restricted covering problem for $\mathcal{D}$ can be computed in
$O( (n/\epsilon)\log n)$ time.
\end{theorem}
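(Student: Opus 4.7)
The plan is to combine the two ingredients already developed in the preceding discussion: the $O(n \log n)$ subroutine that finds an optimal restricted two-cover when the bisector orientation is fixed, and the sampling argument that shows it suffices to try orientations at angular resolution $\Theta(\epsilon)$. Concretely, I would fix a parameter $\epsilon' = \epsilon/4$ (or any constant fraction of $\epsilon$ that absorbs the $4\sin\epsilon' \le \epsilon$ slack from the preceding analysis), pick $\lceil 2\pi/\epsilon' \rceil$ directions $\theta_1,\dots,\theta_k$ evenly spaced over $[0,2\pi)$, and for each $\theta_i$ run the fixed-orientation algorithm to obtain a pair of congruent covering disks $(C_1^{(i)},C_2^{(i)})$. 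The algorithm returns the pair of smallest radius among the $k$ candidates.

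For correctness, I would argue as follows. Let $(C_1^*,C_2^*)$ be an optimal restricted covering and let $b^*$ be its bisector. Among the sampled directions there is some $\theta_i$ whose angle with $b^*$ is at most $\epsilon'$. By the geometric argument just preceding the theorem (the construction of $C_2$ from $C_2^*$ by expanding the radius to $|c(C_2^*)p|$), the optimal restricted two-cover whose bisector is the line through the midpoint $q$ in direction $\theta_i$ has radius at most $(1+4\sin\epsilon')\, r(C_1^*) \le (1+\epsilon)\, r(C_1^*)$. Since the fixed-orientation subroutine finds the optimum restricted two-cover over all bisectors of direction $\theta_i$, the candidate it returns is at least as good, giving a $(1+\epsilon)$-approximation overall.

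For the running time I would simply multiply: the $k = O(1/\epsilon)$ invocations, each taking $O(n \log n)$ time for sorting the disk centers along direction $\theta_i$ and binary-searching over the sweep position (the subroutine described just above), sum to $O((n/\epsilon)\log n)$. A minor point to handle is that the fixed-orientation routine as stated sorts the centers by $x$-coordinate along the sweeping direction orthogonal to $\theta_i$; rotating coordinates is trivial so this does not affect the time bound.

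Honestly, there is no real obstacle here, because the main technical work has already been done: the fixed-orientation subroutine is described in the paragraph on "Fixed Orientation," and the angular perturbation bound $r(C_2) \le (1+4\sin\epsilon) r(C_2^*)$ is proved in the paragraph on "Sampling." The only care needed is bookkeeping: choosing the sample spacing as $\epsilon/4$ rather than $\epsilon$ so that the $4\sin\epsilon$ factor from the perturbation analysis collapses into a clean $(1+\epsilon)$ bound, and observing that the algorithm, being a minimum over the $k$ candidates, is no worse than the specific candidate whose orientation is closest to $b^*$.
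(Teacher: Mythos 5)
Your proposal is correct and follows essentially the same route as the paper: sample $O(1/\epsilon)$ bisector orientations, run the $O(n\log n)$ fixed-orientation subroutine for each, and invoke the $(1+4\sin\epsilon)$ perturbation bound for the sampled orientation closest to the optimal bisector (the paper simply states the bound with $\epsilon' \geq 4\epsilon$ rather than rescaling the sample spacing, but this is the same constant-factor bookkeeping you describe).
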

The running time can be improved to $O(n+1/\epsilon^3  \log 1/\epsilon)$ in the following way.
We start with computing a $6$-approximation in $O(n)$ time, using Theorem~\ref{th:restrictedconstant}.
Let $C'_1$ and $C'_2$ be the resulting disks, and let $r'$ be their radius. As in the
proof of Theorem~\ref{th:generalLTAS}, we round
the centers of all input disks $D\in\mathcal{D}$ to grid points inside $C'_1 \cup C'_2$, with a grid
size $\delta'=\lambda' \epsilon r'$, for some small enough constant $\lambda'$.
Then we apply our FPTAS from Theorem~\ref{th:generalFPTAS} to this set of rounded disks and inflate the resulting disks by a factor of $\sqrt{2}\delta$. These disks are a $(1+\epsilon)$-approximation for the optimal solution. As there are
only $O(1/\epsilon^2)$ rounded disks, this can be done in $O((1/\epsilon^3) \log 1/\epsilon)$ time.

\begin{theorem}
 For a given a set $\mathcal{D}$ of $n$ disks in the plane, a
  $(1+\epsilon)$ approximation for the restricted covering problem
for $\mathcal{D}$ can be computed in $O( n+ (1/\epsilon^3)  \log 1/\epsilon)$ time.
\end{theorem}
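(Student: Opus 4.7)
The plan is to combine the constant-factor approximation of Theorem~\ref{th:restrictedconstant} with the fixed-orientation $(1+\epsilon)$-approximation of Theorem~\ref{th:generalFPTAS}, inserting a grid-rounding step in between to reduce the effective input size from $n$ to $O(1/\epsilon^2)$.

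First I would invoke Theorem~\ref{th:restrictedconstant} to compute, in $O(n)$ time, a pair of congruent disks $C'_1, C'_2$ of radius $r'$ with $r^* \le r' \le 6r^*$, where $r^*$ is the optimum restricted-cover radius. I would then lay down an axis-aligned grid of spacing $\delta' = \lambda' \epsilon r'$ (for a small constant $\lambda'$ chosen during the error analysis) and snap the center $c(D)$ of every input disk $D$ to its nearest grid point, keeping only the disk of largest radius at each occupied point. Since every center lies in $C'_1 \cup C'_2$, which has diameter $O(r')$, the number of occupied grid points is $O((r'/\delta')^2) = O(1/\epsilon^2)$. This rounding step is the only one that touches all $n$ disks and runs in $O(n)$ time.

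Next I would run the fixed-orientation FPTAS of Theorem~\ref{th:generalFPTAS} on this reduced set of $m = O(1/\epsilon^2)$ rounded disks with accuracy parameter $\epsilon$, obtaining a pair of covering disks in $O((m/\epsilon)\log m) = O((1/\epsilon^3)\log(1/\epsilon))$ time, and finally inflate each of them by $\sqrt 2\,\delta'$ to compensate for the rounding. Summing, the total time is $O(n + (1/\epsilon^3)\log(1/\epsilon))$.

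The main obstacle is the correctness of the rounding-plus-inflation step, which I would justify by a two-sided comparison between the original and the rounded instances. On one side, moving each center by at most $\sqrt 2\,\delta'$ changes the optimum restricted-cover radius by at most $\sqrt 2\,\delta' = O(\lambda' \epsilon r^*)$, since $r' \le 6r^*$; on the other side, if two disks of radius $\rho$ cover the rounded instance, then after inflating them by $\sqrt 2\,\delta'$ they must cover every original disk, because each original center lies within $\sqrt 2\,\delta'$ of its rounded copy and the radii are preserved by the rounding rule (we kept the largest at each grid point). Combining these two bounds, the output radius is at most $\rho + \sqrt 2\,\delta' \le (1+\epsilon)(r^* + \sqrt 2\,\delta') + \sqrt 2\,\delta' \le (1 + c\lambda'\epsilon)r^*$ for an absolute constant $c$. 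Choosing $\lambda' = 1/c$ (and, if desired, running the inner FPTAS with $\epsilon/2$ in place of $\epsilon$ to absorb constants) yields the claimed $(1+\epsilon)$-approximation.
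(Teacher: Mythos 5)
Your proposal is correct and follows essentially the same route as the paper: compute the linear-time $6$-approximation of Theorem~\ref{th:restrictedconstant}, round the disk centers to a grid of spacing $\lambda'\epsilon r'$ inside $C'_1\cup C'_2$ to get $O(1/\epsilon^2)$ disks, run the FPTAS of Theorem~\ref{th:generalFPTAS} on the rounded instance, and inflate the output by $\sqrt{2}\delta'$. Your explicit two-sided error analysis is a welcome elaboration of the paper's terser argument, but it is the same idea, not a different proof.
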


\subsection*{Acknowledgment}
Work by Ahn was supported by the National Research Foundation of
Korea Grant funded
by the Korean Government (MEST) (NRF-2010-0009857).
    Work by
    Schlipf was supported by the German Science Foundation (DFG)
    within the research training group 'Methods for Discrete
    Structures'(GRK 1408).
Work by Shin was supported by the National Research Foundation of
Korea Grant funded
by the Korean Government (MEST) (NRF-2011-0002827).

{\small \bibliographystyle{abbrv} \bibliography{paper}}

\end{document}